\newclass{\Wone}{W[1]}
\newclass{\paraNP}{para\text{-}NP}
\newlang{\WEDCE}{WEDCE}
\newlang{\WDCE}{WDCE}
\newlang{\WERE}{WERE}
\newlang{\WSRE}{WSRE}
\newlang{\WBDE}{WBDE}
\newcommand{\iWDCE}{\ensuremath{_{\infty}}\WDCE{}}
\newcounter{reductionCounter}
\newtheorem{theorem}{Theorem}
\newtheorem{lemma}[theorem]{Lemma}
\newtheorem{proposition}[theorem]{Proposition}
\newtheorem{corollary}[theorem]{Corollary}
\newtheorem{claim}[theorem]{Claim}
\theoremstyle{definition}
\newtheorem{definition}[theorem]{Definition}
\newtheorem{redrule}[reductionCounter]{Reduction Rule}
\newcommand{\Card}[1]{\left|#1\right|}
\newcommand{\vdel}{\mathsf{v}}
\newcommand{\edel}{\mathsf{e}}
\newcommand{\eadd}{\mathsf{a}}
\newcommand{\Yes}{\textsc{Yes}}
\newcommand{\No}{\textsc{No}}
\newcommand{\plain}{1}
\newcommand{\problem}[3]{\begin{quote}\noindent\textsc{#1}:\\
                                    \begin{tabular}{rp{0.6\textwidth}}
                                      \textit{Instance: } &#2\\
                                      \textit{Question: } &#3
                                    \end{tabular}\end{quote}}
\DeclareMathOperator{\tr}{tr}
\DeclareMathOperator{\tw}{tw}
\journal{Theoretical Computer Science}
\begin{document}

\begin{frontmatter}
\title{Graph Editing Problems with Extended Regularity Constraints}
\author{Luke Mathieson}
\address{Centre for Information Based Medicine, Bioinformatics and Biomarker Discovery, University of Newcastle, Australia}
\begin{abstract}
  Graph editing problems offer an interesting perspective on sub- and supergraph identification problems for a large variety of target properties. They have also attracted significant attention in recent years, particularly in the area of parameterized complexity as the problems have rich parameter ecologies.

In this paper we examine generalisations of the notion of editing a graph to obtain a regular subgraph. In particular we extend the notion of regularity to include two variants of edge-regularity along with the unifying constraint of strong regularity. We present a number of results, with the central observation that these problems retain the general complexity profile of their regularity-based inspiration: when the number of edits $k$ and the maximum degree $r$ are taken together as a combined parameter, the problems are tractable (i.e. in \FPT{}), but are otherwise intractable.

We also examine variants of the basic editing to obtain a regular subgraph problem from the perspective of parameterizing by the treewidth of the input graph. In this case the treewidth of the input graph essentially becomes a limiting parameter on the natural $k+r$ parameterization.
\end{abstract}

\begin{keyword}
graph algorithms \sep computational complexity \sep algorithms \sep parameterized complexity \sep graph editing
\end{keyword}

\end{frontmatter}

\section{Introduction}
\label{sec:intro}

Graph editing problems --- problems where \emph{editing operations} are applied to an input graph to obtain a graph with a given property --- provide an interesting and flexible framework for considering many graph problems. For example virtually any problem whose witness structure is a subset of the vertices of the input graph can be alternatively phrased as a problem regarding the deletion of vertices to obtain a suitable property. \textsc{$k$-Vertex Cover} can be viewed as the problem of deleting at most $k$ vertices such that the resultant graph has no edges. This target property can be viewed as a \emph{degree contraint}; the degree of all vertices in the final graph should be zero. The nature of the resulting graph can also be defined by which editing operations are allowed; vertex deletion alone results in induced subgraphs, edge deletion alone produces spanning subgraphs, and so on. 

Degree contraint editing problems have long been of interest to comuputational complexity theorists and this interest has been echoed in the parameterized complexity context. The proof of the \NP{}-completeness of \textsc{Cubic Subgraph} was attributed to Chv\'{a}tal by Garey and Johnson~\cite{GJ79}. This naturally generalises to \textsc{$r$-Regular Subgraph}, which we can consider as the problem of removing vertices and edges to obtain an $r$-regular graph. \textsc{$r$-Regular Subgraph} is \NP{}-complete~\cite{Plesnik84} for $r \geq 3$, even under a number input constraints~\cite{CC90,S94,S96,S97}. The problem of finding a maximum \emph{induced} $r$-regular subgraph is \NP{}-complete for $r \geq 0$~\cite{CKL07} ($r=0$ is \textsc{Maximum Independent Set} and the removed vertices form a minimum vertex cover). If we allow only edge deletion, we have the \textsc{$r$-Factor} problem. For $r = 1$ this is the basic matching problem, well known to be polynomial~\cite{Edmonds65a,Edmonds65b,Kuhn55,Kuhn56}. If $r > 1$, or indeed if each vertex has a different target degree (the \textsc{$f$-Factor} problem), Tutte gives a reduction a polynomial-time solvable matching problem~\cite{Tutte54,Tutte74}. This problem can be further generalised by giving each vertex a range of target degrees (the \textsc{Degree Constrained Subgraph} problem) and is polynomial-time solvable~\cite{UrquhartPhD67}. If the edges of the graph have capacities (the \textsc{(Perfect) $b$-Matching} problem), the problem remains in \P{}, using Tutte's \textsc{$f$-Factor} algorithm~\cite{KorteVygen08}. In the \textsc{General Factor}~\cite{Lovasz70,Lovasz72} problem allows each vertex to have a list of target degrees. If the lists contain gaps of greater than 1, the problem is \NP{}-complete and polynomial-time solvable otherwise~\cite{Cornuejols88}. The problem of adding at most 2 vertices and a minimum number of edges to obtain a $\Delta$-regular supergraph, where $\Delta$ is the maximum degree of the input graph, is polynomial-time solvable~\cite{BLT03}.

In the parameterized complexity setting, deleting $k$ vertices to obtain an $r$ regular graph is \Wone{}-hard for $r \geq 0$ with parameter $k$~\cite{MathiesonSzeider12}, but \FPT{} with parameter $k+r$~\cite{MT08}. Mathieson and Szeider~\cite{MathiesonSzeider12} give a series of results for similar problems, which is extended further in Mathieson's doctoral thesis~\cite{Mathieson10}. In this they examine the \textsc{Weighted Degree Constrained Editing} (\WDCE{}) problem where the vertices and edges are weighted and each vertex has a set of target degrees, for combinations of vertex deletion, edge deletion and edge addition. When parameterized by the number $k$ of edits allowed, the problem is \Wone{}-hard, when parameterized by the number of edits and the maximum value $r$ in any of the degree lists, the problem is \FPT{}. When the target degree sets are singletons and the editing operations include only edge deletion and addition, the problem is in \P{}. The problem remains \Wone{}-hard with parameter $k$ even in the unweighted case where each vertex has the same target degree $r$. In the singleton case, where vertex deletion or vertex deletion and edge deletion is allowed, the problem has a kernel of size polynomial in $k+r$ (and hence is in \FPT{} for parameter $k+r$). The more general \WDCE{} problem with vertex deletion and/or edge deletion has a kernel of size exponential in $k+r$, and hence is \FPT{} for parameter $k+r$. Froese \emph{et al.}~\cite{FroeseNichterleinNiedermeier14} prove that no polynomial kernel is possible in these cases unless $\NP{} \subseteq \coNP{}/poly$. For the general weighted case with degree lists where the editing operations are any combination of vertex deletion, edge deletion and edge addition Mathieson and Szeider~\cite{MathiesonSzeider12} give a logic based proof of \FPT{} membership. Mathieson~\cite{Mathieson10} shows that if vertex deletion and edge addition are allowed (and perhaps edge deletion), then in the weighted case (even with singleton vertex lists), no polynomial kernel is possible unless $\NP{} \subseteq \coNP{}/poly$. In particular they give the following central theorem:

\begin{theorem}[Mathieson and Szeider~\cite{MathiesonSzeider12} Theorem 1.1]\label{thm:queen}
  For all non-empty subsets $S$ of $\{\vdel,\edel,\eadd\}$ the problem
  $\WDCE(S)$ is fixed-parameter tractable for parameter $k+r$, and
  $\W[1]$-hard for parameter $k$. If $\vdel\in S$ then $WDCE(S)$ remains
  $\W[1]$-hard for parameter $k$ even when all degree lists are restricted
  to~$\{r\}$ and all vertices and edges have unit weight~$1$.
\end{theorem}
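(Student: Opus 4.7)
The plan is to prove the two directions of the theorem separately and then indicate how each of the seven possible non-empty subsets $S$ of $\{\vdel,\edel,\eadd\}$ is handled.

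For the \FPT{} direction with parameter $k+r$, I would pursue a kernelization argument. The starting observation is that an edit budget of at most $k$ can directly touch only $O(k)$ vertices (those incident to an edited edge or deleted themselves); every other vertex retains its original degree and must already satisfy its degree list. Since the lists contain values at most $r$, any \emph{untouched} vertex has degree in $[0,r]$. This yields the central reduction rule: a vertex $v$ with degree far from its list --- say $\deg(v)>r+k$ in the unit-weight case --- cannot be brought into the list by edge edits alone, so if $\vdel\in S$ it is forced to be deleted, and otherwise the instance must be a \No{}-instance. Iterating bounds both the number of such forced deletions and the residual maximum degree; a neighbourhood-based argument then produces a kernel of size $f(k,r)$ and hence $\FPT{}$ membership. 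The weighted case is handled by branching on whether each high-degree vertex is deleted and charging its weight against the budget; alternatively one can encode the kernel in MSO$_2$ and invoke Courcelle's theorem on the bounded-treewidth residual, though this gives weaker bounds.

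For the \Wone{}-hardness direction with parameter $k$, I would reduce from \textsc{Multicolored Clique}, a canonical \Wone{}-complete problem with the natural clique-size parameter. Given a $k$-partite instance $G$, the goal is to construct a \WDCE{} instance whose only feasible $k$-edit solutions correspond to choosing one vertex per colour class so that all $\binom{k}{2}$ inter-class edges are present in $G$. When $\vdel\in S$, I would build selector gadgets in which exactly one vertex deletion per colour class is forced, and arrange the degree constraints so that the surviving representatives must witness a clique; this construction can be carried out with unit weights and singleton target lists $\{r\}$, matching the strengthening in the second sentence of the theorem. When $\vdel\notin S$, editing is restricted to $\edel$ and/or $\eadd$, and the analogous reduction must use non-unit weights or non-singleton degree lists to encode selection via allowed/forbidden edges --- this is precisely why the sharpening to the unit-weight singleton case is only claimed when $\vdel\in S$.

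The main obstacle will be designing the selector and adjacency gadgets for the \Wone{}-hardness reduction in the restricted unit-weight, singleton-list regime: one must simulate the combinatorial choice of a clique purely through local degree equalities, with the regularity target $r$ allowed to depend on the input but the edit budget $k$ depending only on the clique size. Once these gadgets are fixed, the correctness of the reduction (and the bookkeeping that the reduction is an \FPT{}-reduction) is routine. The kernelization part, by contrast, is a relatively standard recipe once the touched-vertex observation is in place, with the main effort going into a case analysis across the seven subsets $S$ and the careful handling of weights.
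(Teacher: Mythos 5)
First, a point of reference: this paper does not prove Theorem~\ref{thm:queen} at all. It is quoted verbatim from Mathieson and Szeider~\cite{MathiesonSzeider12} (their Theorem~1.1) and used purely as imported background, so your attempt has to be judged against the proof in that reference rather than against anything in the present text.

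Measured that way, your hardness plan is right in outline: the unit-weight, singleton-list hardness for $\vdel\in S$ is obtained there by a reduction from a multicoloured clique problem with per-colour selection gadgets enforced through local degree equalities, and the cases with $\vdel\notin S$ use weights and/or non-singleton lists, as you anticipate. The genuine gap is in the \FPT{} direction. Your kernelization recipe (delete vertices of weighted degree above $k+r$, then bound the remainder by a neighbourhood/locality argument) is essentially the real argument \emph{only} for the deletion-only subsets $S\subseteq\{\vdel,\edel\}$. It does not go through once $\eadd\in S$: a degree-deficient vertex can be repaired by an added edge whose other endpoint is almost unconstrained, so there are $\Theta(n)$ candidate edits and neither a bounded search tree nor the ``all touched vertices lie near dirty vertices'' locality argument bounds the instance; this is precisely why the follow-up literature cited in Section~\ref{sec:intro} shows that several addition variants admit no polynomial kernel under standard assumptions. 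Your proposed fallback is also broken: after the high-degree rule the residual graph has maximum degree at most $k+r$, which does \emph{not} bound its treewidth (bounded-degree expanders have treewidth $\Omega(n)$), so Courcelle's theorem is not applicable. What Mathieson and Szeider actually do for the general case is express the edited degree constraints in \emph{first-order} logic and invoke the Frick--Grohe metatheorem for classes of bounded local treewidth, for which the degree bound $k+r$ suffices; the present paper reuses exactly this device for $\WERE$ and $\WSRE$ in Section~\ref{subsec:WSRE_logic}. To repair your write-up you would need either that logic step or an explicit combinatorial treatment of the endpoints of added edges.
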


Golovach~\cite{Golovach15} gives a concrete \FPT{} algorithm for the unweighted case with parameter $k+r$ where vertex deletion and edge addition are allowed and shows that this case has no polynomial kernel unless $\NP{} \subseteq \coNP{}/poly$. Froese \emph{et al.}, in addition to the results mentioned above, show that the unweighted case with degree lists and edge addition has a kernel of size polynomial in $k+r$. In fact they show that either the instance is polynomial-time solvable or the kernel is polynomially-sized in $r$ alone. Golovach~\cite{Golovach14b} looks at the case where the target graph must also remain connected. Dabrowski \emph{et al.}~\cite{DabrowskiGHPT15} look at the case where the input is planar and vertex deletion and edge deletion are allowed, and show that although still \NP{}-complete, a kernel polynomially-sized in the number of deletions is obtainable. Belmonte \emph{et al.}~\cite{BelmonteGvHP14} study the problem of using edge contraction to fulfil degree constraints.

More recently, Bulian and Dawar~\cite{Bulian2016} demonstrated a powerful meta-theorem based approach showing (amongst other results) that for a large collection of classes $\mathcal{C}$ of sparse graphs, determining the edit distance of an input graph $G$ from some graph in $\mathcal{C}$ is fixed-parameter tractable. This meta-theorem gives many (if not all) of the results discussed here as a corollary, with the obvious caveat being essentially non-constructive in algorithmic terms.

\subsection{Our Contribution}
\label{sec:contrib}

In this paper we look at problems with alternative forms of degree constraints: \emph{edge-degree-regularity}, \emph{edge-regularity} and \emph{strong-regularity} (q.v. Section~\ref{sec:def} for definitions). We show that for these constraints, and with any combination of vertex deletion, edge deletion and edge addition, these problems are typically fixed-parameter tractable with the combined parameter $k+r$, para-\NP{}-complete with parameter $r$ and \Wone{}-hard with parameter $k$.

We also consider the parameterization of certain \WDCE{} problems by the treewidth of the input graph where the number of editing operations is unbounded and show that finding an (induced) $r$-regular subgraph of graphs of bounded treewidth is in \FPT{}. When both vertex deletion and edge addition is allowed, the problem becomes trivially polynomial-time solvable by simply editing the graph into a $K_{r+1}$ if possible, and answering no otherwise.

\section{Definitions and Notation}
\label{sec:def}

We denote the closed (integer) interval from $a$ to $b$ by $[a,b]$. If $a=0$, we denote the interval $[0,b]$ by $[b]$. We denote the power set of a set $X$ by~$\mathcal{P}(X)$. 


In this paper we consider only simple, undirected graphs. Given a graph $G=(V,E)$ and two vertices $u$ and $v$ we denote the edge $\{u,v\}\in E$ by $uv$ or $vu$. The \emph{open neighbourhood} $N_{G}(u)$ of a vertex $u$ is the set $\{v \mid uv \in E\}$. The \emph{closed neighbourhood} $N_{G}[u]$ of a vertex $u$ is $N_{G}(u) \cup \{u\}$. The degree of a vertex $u$ is denoted $d(u)$ and $d(u) = \Card{N_{G}(u)}$. Given an edge $uv \in E$, the \emph{edge-degree} $d_{G}(uv)$ is the sum of the degrees of $u$ and $v$, i.e. $d_{G}(uv) = d_{G}(u) + d_{G}(v)$. A graph is $r$-\emph{regular} if for all $u \in V$ we have $d_{G}(u) = r$. If a graph is $r$-regular for some $r$, it is \emph{regular}. If for every edge $uv \in E$ we have $d_{G}(uv) = r$, we say $G$ is \emph{edge-degree}-$r$-\emph{regular}. A graph is $(r,\lambda)$-\emph{edge-regular} if every vertex has degree $r$ and every edge $uv$ has $\Card{N_{G}(u)\cap N_{G}(v)} = \lambda$. A graph is $(r,\lambda,\mu)$-\emph{strongly-regular} if it is $(r,\lambda)$-edge-regular and for every pair $u$, $v$ of non-adjacent vertices we have $\Card{N_{G}(u)\cap N_{G}(v)} = \mu$. The definitions are generalised to degree contraints in the problems we consider, see Section~\ref{sec:prob_def}.

We consider three graph editing operations: vertex deletion, edge deletion and edge addition. For brevity we denote these by $\vdel{}$, $\edel{}$ and $\eadd{}$ respectively. In a weighted graph, the cost of deleting a vertex with weight $w$ is $w$, the cost of deleting an edge with weight $w$ and the cost of adding an edge with weight $w$ is $w$. In an unweighted graph, the cost of each editing operation is~$1$.

Throughout this paper we consider a number of variants graph editing problems. In each case the input is a graph $G$ along with a weight function $\rho$ and a degree function $\delta$. We extend the normal notation for the degree of a vertex to the weighted degree of a vertex, denoted $d^{\rho}(v)$ where $d^{\rho}(v) = \sum_{u\in N_{G}(v)}\rho(uv)$. We also use this notation for the weighted edge-degree $d^{\rho}(uv)$.

For a given base editing problem $\Pi$, we denote by $\Pi_{1}$ the unweighted variant where $\rho(x) = 1$ for every $x$ in the domain of $\rho$, by $_{\infty}\Pi$ we denote the variant where the editing cost is removed (i.e. $k$ is no longer part of the problem), by $\Pi^{r}$ the variant where $\delta(x) = r$ (equivalently $\delta(x) = \{r\}$) for a fixed $r$ and all $x$ in the domain of $\delta$ and by $\Pi^{\ast}$ the variant where $\delta(x) = k_{x}$ (equivalently $\delta(x) = \{k_{x}\}$) for some $k_{x}$ for all $x$ in the domain of~$\delta$.

\subsection{Reduction Rules, Kernelization and Soundness}
\label{sec:reduct-rules-soundn}

A reduction rule is a self mapping from an instance of a problem to another instance of the same problem that reduces the instance size. In particular, in the Parameterized Complexity context, reduction rules typically form the basis of Kernelization algorithms. This technique is often called a ``reduction to problem kernel''. While we refer the reader to standard texts~\cite{DF99,DowneyF13,FG06,Niedermeier06} for the technical details of reduction rules and kernelization, we note here in particular the definition of \emph{soundness} of a reduction rule. A reduction rule is \emph{sound} if and only if it preserves yes instances and no instances, \emph{i.e.} it maps any yes instance to a yes instance and any no instance to a no instance of the given problem.

\subsection{Compositional Problems and Polynomial Sized Kernels}

For some problems we may suspect that they may not have kernels bounded by a polynomial in the parameter. Of course fixed-parameter tractability guarantees that they have a kernelization of some form, however this may also be impractical. Bodlaender \emph{et al.}~\cite{BodlaenderDFH09} develop a tool aimed at showing that problems do not have a polynomially sized kernel, based on some complexity theoretic assumptions.

\sloppypar\begin{definition}[Composition] A \emph{composition algorithm} for a parameterized problem $(\mathcal{P},\kappa)$ is an algorithm that receives as input a sequence $((x_{1},k), \ldots, (x_{t},k))$ of instances of $(\mathcal{P},\kappa)$ and outputs in time bounded by a polynomial in $\sum_{i=1}^{t} \Card{x_{i}}+k$ an instance $(x',k')$ where:
\begin{enumerate}
\item  $(x',k')$ is a \Yes{}-instance of $(\mathcal{P},\kappa)$ if and only if $(x_{i},k)$ is a \Yes{}-instance of $(\mathcal{P},\kappa)$ for some $i \in \{1,\ldots,t\}$.
\item $k' = p(k)$ where $p$ is a polynomial.
\end{enumerate}
\end{definition}

This definition is then accompanied by the key lemma:

\begin{lemma}[\cite{BodlaenderDFH09}]\label{lemma:comp_no_poly_kernel}
Let $(\mathcal{P},\kappa)$ be a parameterized problem with a composition algorithm where the non-parameterized version of the problem $\mathcal{P}$ is $\NP{}$-complete. Then if $(\mathcal{P},\kappa)$ has a polynomially sized kernel, the Polynomial Hierarchy collapses to the third level.
\end{lemma}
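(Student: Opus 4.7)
The plan is to combine the hypothesised polynomial kernelization with the given composition algorithm to produce an OR-distillation for the $\NP{}$-complete classical problem $\mathcal{P}$, and then invoke the Fortnow--Santhanam theorem together with Yap's theorem to force the polynomial hierarchy collapse to $\Sigma_{3}^{p}$.

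First I would fix notation: let the composition algorithm witness the polynomial bound $k' \leq p(k)$, and suppose towards contradiction that $(\mathcal{P}, \kappa)$ admits a kernelization producing instances of size at most $q(k)$ for some polynomial $q$. In the standard setting for a distillation we are handed a sequence $(y_{1}, \ldots, y_{t})$ of classical $\mathcal{P}$-instances, each of length at most $n$, and must produce in polynomial time an instance $y'$ of size polynomial in $n$ (independent of $t$) such that $y'$ is a \Yes{}-instance of $\mathcal{P}$ if and only if some $y_{i}$ is. I would bucket the $y_{i}$ according to their parameter value; since within instances of length at most $n$ only polynomially many distinct values arise (or, failing that, since we may restrict attention to sequences with a common parameter and handle the remaining bucketing by the standard padding argument), we obtain polynomially many equi-parameter sub-sequences to which the composition is applicable.

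Next, for each bucket with common parameter $k$, running the composition algorithm produces an instance $(x', k')$ with $k' \leq p(k)$, and then the kernelization reduces this to an equivalent instance of size at most $q(p(k))$, which is polynomial in $k$ and hence polynomial in $n$. By the defining property of composition this output is a \Yes{}-instance if and only if some $y_{i}$ in the bucket is a \Yes{}-instance. Polynomially many such bucket-distillations are then combined into a single distilled instance of polynomial size by exploiting the $\NP{}$-completeness of $\mathcal{P}$: the disjunction of polynomially many polynomial-sized $\mathcal{P}$-instances is decidable in $\NP{}$ and can therefore be Karp-reduced back to $\mathcal{P}$, yielding a genuine OR-distillation of the $\NP{}$-complete classical problem $\mathcal{P}$.

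Finally I would appeal to the theorem of Fortnow and Santhanam, which states that no $\NP{}$-complete language admits an OR-distillation with output size polynomial in the input length unless $\coNP{} \subseteq \NP{}/\mathrm{poly}$; by Yap's theorem this inclusion collapses the polynomial hierarchy to its third level. The principal obstacle --- and the genuine content of the lemma --- is the Fortnow--Santhanam theorem itself, whose proof is a delicate information-theoretic / rank-bound analysis of nondeterministic machines with polynomial advice; I would quote it as a black box. The remaining work is essentially bookkeeping: handling the bucketing and the gluing step carefully enough that the final distilled instance is polynomial in $n$ alone, rather than in $n$ and $t$.
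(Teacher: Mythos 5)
This lemma is imported verbatim from Bodlaender \emph{et al.}~\cite{BodlaenderDFH09} and the paper supplies no proof of its own, so the only benchmark is the original argument. Your proposal reconstructs that argument correctly and in essentially the same way --- bucket the input instances by parameter value, apply the composition to each bucket, shrink each output with the hypothesised polynomial kernel, recombine the polynomially many results into one instance via the \NP{}-completeness of $\mathcal{P}$ to obtain an OR-distillation, and then invoke the Fortnow--Santhanam theorem together with Yap's theorem to get the collapse to the third level.
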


For details of the Polynomial Hierarchy we refer to Stockmeyer~\cite{Stockmeyer76}, and note that a collapse in the Polynomial Hierarchy seems unlikely~\cite{Papadimitriou94}.

Therefore any demonstration of the existence of a composition algorithm for a fixed-parameter tractable problem indicates that the problem is unlikely to have a polynomially sized kernel.

\subsection{Problem Definitions}
\label{sec:prob_def}

We now define the constraints of interest in this paper, and the consequent graph editing problems.

\subsubsection{Edge-Degree Regularity}\label{subsubsec:edge_degree_reg_def}

Edge-degree constraints naturally extend vertex based degree constraints, notably any $r$-regular graph is edge-degree $2r$-regular. However an edge-degree regular graph may not be regular. Therefore the class of edge-degree regular graphs forms a proper superclass of the class of regular graphs.

We define the \textsc{Weighted Edge Degree Constraint Editing} problem, or $\WEDCE$ similarly to the $\WDCE$ problem.

\problem{$\WEDCE{}(S)$}{A graph $G=(V,E)$, two integers $k$ and $r$, a weight function $\rho:V \cup E \rightarrow \{1,2,\dots\}$, and a degree list function $\delta: E \rightarrow \mathcal{P}([r])$.}{Can we obtain from $G$ a graph $G'=(V',E')$ using editing operations from $S$ only, such that for all $uv \in E'$ we have $d^{\rho}(uv) \in \delta(uv)$, with total editing cost at most $k$?}

We write $\WEDCE_\plain(S)$ to indicate that the given graph is unweighted, and we write $\WEDCE^*(S)$ if all degree lists are singletons; if all singletons are $\{r\}$ then we write $\WEDCE^{r}(S)$. We omit set braces whenever the context allows, and write, for example, $\WEDCE(\vdel)$ instead of $\WEDCE(\{\vdel\})$.







\subsubsection{Edge Regularity}\label{subsubsec:edge_reg_def}

A graph $G$ is $(r,\lambda)$-edge regular if every vertex has degree $r$ and every edge $uv \in E(G)$ has $\Card{N(u)\cap N(v)} = \lambda$. We define the \textsc{Weighted Edge Regularity Editing} ($\WERE{}$) problem.

\problem{$\WERE{}(S)$}{A graph $G=(V,E)$, three integers $k$, $r$ and $\lambda \leq r$, a weight function $\rho:V \cup E \rightarrow \{1,2,\dots\}$, a degree function $\delta: V \rightarrow \mathcal{P}([r])$, and a neighbourhood function $\nu: V\times V \rightarrow \mathcal{P}([\lambda])$.}{Can we obtain from $G$ a graph $G'=(V',E')$ using editing operations from $S$ only, such that for all $v \in V'$ we have $d^{\rho}(v) \in \delta(v)$ and for every $uv \in E'$ we have $\Card{N(u)\cap N(v)} \in \nu(u,v)$, with total editing cost at most $k$?}

Again we write $\WERE_{\plain}$ when we consider unweighted graphs, and $\WERE^{*}$ when $\delta$ and $\nu$ are restricted to singletons. If $\delta$ and $\nu$ are restricted to $\{r\}$ and $\{\lambda\}$ respectively we write $\WERE^{r,\lambda}$.

\subsubsection{Strong Regularity}\label{subsubsec:strong_ref_def}

$(r,\lambda,\mu)$-strongly regular graphs are $(r,\lambda)$-edge regular graphs where for all vertices $u,v \in V(G)$ such that $uv \notin E$ we have $\Card{N(u) \cap N(v)} = \mu$. For this set of constraints, our problem becomes the \textsc{Weighted Strongly Regular Editing} ($\WSRE$) problem. 

\problem{$\WSRE{}(S)$}{A graph $G=(V,E)$, four integers $k$, $r$ and $\lambda, \mu \leq r$, a weight function $\rho:V \cup E \rightarrow \{1,2,\dots\}$, a degree function $\delta: V \rightarrow \mathcal{P}([r])$, and two neighbourhood functions $\nu: V \times V  \rightarrow \mathcal{P}([\lambda])$ and $\xi: V \times V \rightarrow \mathcal{P}([\mu])$.}{Can we obtain from $G$ a graph $G'=(V',E')$ using editing operations from $S$ only, such that for all $v \in V'$ we have $d^{\rho}(v) \in \delta(v)$, for every $uv \in E'$ we have $\Card{N(u)\cap N(v)} \in \nu(u,v)$, and for every $uv \notin E'$ we have $\Card{N(u)\cap N(v)} \in \xi(u,v)$, with total editing cost at most $k$?}

We denote the case where $\delta$, $\nu$ and $\xi$ are restricted to singletons by $\WSRE^{*}$.

\section{Edge-Degree Regular Graphs}\label{sec:Edge-Degree_graphs}

Edge addition as a general operation is ill-defined for this probem, much as vertex addition makes little sense in the $\WDCE$ context, as the addition of a new edge requires the invention of constraints for that edge. Thus we restrict ourselves to vertex deletion and edge deletion. 

The $\NP{}$-completeness and $\W[1]$-hardness for parameter $k$ of $\WEDCE^{r}_{\plain}(S)$ where $\vdel{} \in S$ follow from the proof of Theorem~3.3 in~\cite{MathiesonSzeider12} as the graph constructed will be edge-degree $2r$-regular if and only if the same set of vertex deletions can be made as for the $r$-regular case.

When $S = \{\edel{}\}$ the $\NP{}$-completeness of $\WEDCE^{3}_{\plain}(\edel{})$ is established by the $\NP{}$-completeness of \textsc{Maximum H-Packing}~\cite{HellKirkpatrick78}, and more particularly by the \textsc{$K_{1,2}$-Packing} subproblem~\cite{PrietoSloper06}, obtained by setting $\delta(uv) = \{3\}$ for all edges $uv \in E(G)$.

\problem{$K_{1,2}$-Packing}{A graph $G$, an integer $k$.}{Does $G$ contain at least $k$ vertex-disjoint copies of the complete bipartite graph $K_{1,2}$?}

Of course this immediately gives the following:

\begin{proposition}
$\WEDCE^{r}_{\plain}(\edel{})$ is $\paraNP{}$-complete for parameter $r$.
\end{proposition}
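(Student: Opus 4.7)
The plan is to invoke the standard characterisation of \paraNP{}-completeness: a parameterized problem is \paraNP{}-complete precisely when it lies in \paraNP{} and there exists a constant $c$ such that fixing the parameter to $c$ yields an \NP{}-hard slice. Under this characterisation the proposition is essentially a direct translation of the \NP{}-completeness observation already made in the paragraph immediately preceding it.

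First I would verify membership in \paraNP{}. The problem $\WEDCE^{r}_{\plain}(\edel{})$ is in \NP{}: a certificate is a set of at most $k$ edges to delete from $G$, and verifying that every surviving edge $uv$ satisfies $d(uv) \in \delta(uv)$ runs in polynomial time. Since the problem is in \NP{} regardless of the value of $r$, it is in \paraNP{} under the parameterization by~$r$.

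For hardness I would simply appeal to the discussion above the proposition: fixing $\delta(uv) = \{3\}$ on every edge realises \textsc{$K_{1,2}$-Packing} as the restricted problem $\WEDCE^{3}_{\plain}(\edel{})$, and \textsc{$K_{1,2}$-Packing} is \NP{}-complete by~\cite{PrietoSloper06}. Hence the single slice $r = 3$ of $\WEDCE^{r}_{\plain}(\edel{})$ is already \NP{}-hard, and by the characterisation above this is sufficient for \paraNP{}-hardness with parameter~$r$. Combining with membership gives the claimed \paraNP{}-completeness.

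The proposition is stated as a direct corollary and there is no real obstacle in the argument; the only point worth flagging is that \paraNP{}-hardness does not require \NP{}-hardness of every slice nor a family of slices with growing $r$, but merely one slice at a fixed value of $r$ — which the $K_{1,2}$-packing reduction supplies at $r = 3$.
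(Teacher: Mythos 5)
Your proposal is correct and matches the paper's (implicit) argument exactly: the paper presents the proposition as an immediate consequence of the \NP{}-completeness of the $r=3$ slice via the \textsc{$K_{1,2}$-Packing} reduction, with membership in \paraNP{} being routine. Your added remark that a single \NP{}-hard slice at a fixed parameter value suffices for \paraNP{}-hardness is precisely the standard characterisation the paper is relying on.
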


It is a necessary condition for edge-degree $r$-regularity that the line graph is $r$-regular, however we cannot simply convert the graph to the line graph and perform the editing there, as the deletion of a vertex in the original graph does not have the same effect as the deletion of an edge in the line graph.

\subsection{A Bounded Search Tree Algorithm for WEDCE($\vdel{}, \edel{}$)}\label{subsec:BST_WEDCE}

Consider the case where both edge deletion and vertex deletion are allowed. Any isolated vertex can be discarded, as it will play no part in the edge-degree of any edge (this can be done without reducing $k$).

The algorithm, again based on Moser and Thilikos'~\cite{MT08}, runs as follows:

\begin{enumerate}
\item If $k \geq 0$ and for all edges $uv$, $d^{\rho}(uv) \in \delta(uv)$, answer \Yes{}. If $k \leq 0$ and there exists an edge $uv$ with $d^{\rho}(uv) \notin \delta(uv)$, answer \No{}.
\item Choose an edge $uv$ with $d^{\rho}(uv) \notin \delta(uv)$.
\item Arbitrarily select a set $M \subseteq N(u)\cup N(v)\setminus \{u,v\}$ of at most $r+1$ vertices. Let $E_M$ denote the edges with one endpoint in $M$ and the other as either $u$ or $v$.
\item Branch on all possibilities of deleting $u$, $v$, $uv$, one element $x$ of $M$ or reducing the weight of one edge in $E_M$ by $1$. Reduce $k$ by the weight of the deleted element in the first four cases, and by one if an edge is reduced in weight.
\item Return to step 1.
\end{enumerate}

The branching set consists of an edge $uv$ where $d^{\rho}(u)+d^{\rho}(v) \notin \delta(uv)$, $u$, $v$ at most $r+1$ neighbours of $u$ and $v$, and the edges between $u$ and $v$ and the $r+1$ neighbours. The branching set has at most $2r+5$ elements, thus the tree has at most $\tr(2r+5,k) = ((2r+5)^{k+1}-1)/(2r+4)$ vertices.

This gives the following result:

\begin{lemma}
$\WEDCE(S)$ is fixed-parameter tractable for parameter $(k,r)$ where $\emptyset \neq S \subseteq \{\vdel{},\edel{}\}$.
\end{lemma}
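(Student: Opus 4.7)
The plan is to verify both the soundness and the running-time bound of the bounded search tree algorithm already sketched, for any non-empty $S\subseteq\{\vdel,\edel\}$.

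First, I would establish soundness of the branching step. Fix a violating edge $uv$ with $d^{\rho}(uv)\notin\delta(uv)$ and suppose toward contradiction that some valid editing sequence $E^\star$ of cost at most $k$ modifies no element of $\{u,v,uv\}\cup M\cup E_M$. Then $u$, $v$ and $uv$ all survive, every vertex in $M$ survives and every edge in $E_M$ retains its original weight, so in the resulting graph $d^{\rho}(uv) \geq 2\rho(uv) + \sum_{e\in E_M}\rho(e)\geq 2 + \lvert E_M\rvert$. In the generic case $\lvert M\rvert = r+1$ each $w\in M$ is a neighbour of $u$ or $v$, contributing at least one edge of weight $\geq 1$ to $E_M$, so $d^{\rho}(uv)\geq r+3 > r\geq\max\delta(uv)$, a contradiction. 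In the boundary case $\lvert N(u)\cup N(v)\setminus\{u,v\}\rvert < r+1$ the set $M$ already exhausts those neighbours, so there are no other edges incident to $u$ or $v$ for $E^\star$ to touch, and $d^{\rho}(uv)$ is unchanged from its violating input value. Either way we contradict $E^\star$ being valid. This covers the case $d^{\rho}(uv)>\max\delta(uv)$; in the other case $d^{\rho}(uv)<\min\delta(uv)$, since operations in $S$ only ever decrease edge-degrees, the constraint on $uv$ can be satisfied only by removing $uv$ from the graph, which forces modification of $u$, $v$, or $uv$ itself — all of which lie in the branching set.

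Second, I would count and bound the recursion. The branching set has size at most $3 + (r+1) + (r+1) = 2r+5$, and every recursive call decreases the remaining budget by at least $1$ (each editing operation, and each unit weight reduction, costs $\geq 1$). Hence the search tree has at most $\tr(2r+5,k)\le(2r+5)^{k+1}$ nodes. For each proper subset $S\subsetneq\{\vdel,\edel\}$ we simply drop those branches whose operation lies outside $S$ (for $S=\{\vdel\}$ drop the $uv$ and $E_M$ branches; for $S=\{\edel\}$ drop the $u$, $v$, and $M$ branches); soundness is preserved because a valid $E^\star$ uses only operations in $S$, and the branching factor only shrinks.

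Third, I would observe that the per-node bookkeeping — scanning for a violated edge, assembling $M$ and $E_M$, performing the chosen operation, updating $d^{\rho}$-values, and checking the Step~1 termination conditions — runs in time polynomial in $\lvert V\rvert+\lvert E\rvert$. Multiplying gives total running time $O((2r+5)^{k}\cdot\mathrm{poly}(\lvert G\rvert))$, i.e.\ fixed-parameter tractability with combined parameter $(k,r)$.

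The main obstacle is the soundness argument in its sharp form: specifically, justifying that branching on only $r+1$ of the (potentially many) neighbours of $u\cup v$ is complete. The argument above hinges on the weight-lower-bound $\sum_{e\in E_M}\rho(e)\geq r+1$, together with the observation that the boundary case $\lvert M\rvert<r+1$ trivially exhausts the relevant incident edges — the two halves of this case split are where care is required.
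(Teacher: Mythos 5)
Your proposal is correct and follows essentially the same route as the paper: the same bounded search tree with branching set $\{u,v,uv\}\cup M\cup E_M$ of size at most $2r+5$ and the same $\tr(2r+5,k)$ bound on the tree, with smaller branching sets for the restricted operation sets. The only difference is that you make explicit the completeness/soundness argument for branching on just $r+1$ neighbours (via the weight lower bound $2\rho(uv)+\sum_{e\in E_M}\rho(e)\geq r+3>r$ and the exhausted-neighbourhood boundary case), which the paper leaves implicit by deferring to the Moser--Thilikos template; your version of that argument is sound.
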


If only vertex deletion is allowed, the branching set may be reduced, resulting in a tree with at most $\tr(r+3,k)=((r+3)^{k+1}-1)/(r+2)$ vertices.

\subsection{A Kernelization for WEDCE$^{*}(\vdel{},\edel{})$}\label{subsec:kernel_WEDCE}

$\WEDCE^{*}(\vdel{},\edel{})$ admits a kernelization similar to $\WDCE(\vdel{},\edel{})$.

\subsubsection{Reduction Rules}\label{subsubsec:edge_degree_reduction_rules}

\begin{redrule}\label{redrule:large_degree}
  Let $(G,(k,r))$ be an instance of $\WEDCE^{*}(S)$.  If there is a vertex $v$
  in $G$ such that $d^\rho(v) > k+r$, then replace $(G,(k,r))$ with
  $(G',(k',r))$ where $G' = G - v$ and $k' = k - \rho(v)$.
\end{redrule}

Reduction Rule~$1$ for the $\WDCE^{\ast}$ problem~\cite{MathiesonSzeider12} states that given a vertex $v$ of degree greater than $k+r$, if there is a solution with at most $k$ deletions, $v$ must be deleted and $k$ reduced by $\rho(v)$. This rule also holds for $\WEDCE^{*}$.

\begin{claim}\label{claim:RR1_for_WEDCE}
  Reduction Rule~\ref{redrule:large_degree} is sound for $\WEDCE^{*}(S)$ with $\{\vdel{}\} \subseteq S \subseteq \{\vdel{},\edel{}\}$.
\end{claim}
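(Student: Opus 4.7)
The plan is to mirror the soundness argument for Reduction Rule~1 of $\WDCE^{\ast}$ in~\cite{MathiesonSzeider12}, the only substantive change being that the bound $d^{\rho}_{G''}(v)\le r$ on $v$'s weighted degree in an edited graph $G''$ must now be extracted from the \emph{edge}-degree constraint rather than from a direct vertex-degree constraint. I would verify both directions of the yes-instance equivalence.

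The backward direction is immediate: if $(G-v,(k-\rho(v),r))$ admits a valid edit set $\mathcal{E}'$ of cost at most $k-\rho(v)$, then augmenting $\mathcal{E}'$ with the deletion of $v$ gives an edit set for $G$ of cost at most $k$, and the resulting graph on $G$ is identical to the one $\mathcal{E}'$ produces on $G-v$, so every surviving edge still satisfies $d^{\rho}(uv)\in\delta(uv)$.

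For the forward direction, assume $(G,(k,r))$ is a yes-instance witnessed by an edit set $\mathcal{E}$ of cost at most $k$, and suppose for contradiction that $v$ is not among the vertex deletions in $\mathcal{E}$. Then $v$ lies in the final graph $G''$. If some edge $uv\in E(G'')$ remains, the requirement $d^{\rho}_{G''}(uv)\in\delta(uv)\subseteq[r]$ forces $d^{\rho}_{G''}(v)\le d^{\rho}_{G''}(uv)\le r$; if $v$ is isolated in $G''$, then $d^{\rho}_{G''}(v)=0\le r$ trivially. Either way, $\mathcal{E}$ must reduce $v$'s weighted degree by at least $d^{\rho}(v)-r>k$. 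Exactly as in the $\WDCE^{\ast}$ argument, each edit in $\mathcal{E}$ decreases $d^{\rho}(v)$ by at most its own cost --- an edge deletion of an incident edge $uv$ of weight $w$ contributes $w$ both to the total cost and to the reduction of $d^{\rho}(v)$, and the deletion of a neighbour is accounted for analogously --- so $\mathcal{E}$ would have total cost strictly greater than $k$, contradicting the hypothesis. Hence $\mathcal{E}$ must delete $v$, and stripping that deletion from $\mathcal{E}$ yields an edit set for $G-v$ of cost at most $k-\rho(v)$.

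The only step that is not entirely mechanical is the cost-versus-reduction bookkeeping in the forward direction; this is precisely where the soundness argument for the $\WDCE^{\ast}$ rule is invoked directly, and it transfers to $\WEDCE^{\ast}$ once the edge-degree constraint has been used to deliver the vertex-weighted-degree bound $d^{\rho}_{G''}(v)\le r$ on the surviving copy of $v$.
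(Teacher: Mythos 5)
Your overall strategy coincides with the paper's: argue that any solution of cost at most $k$ must delete $v$, because otherwise too much must be spent around $v$, and then both directions of the equivalence follow easily. The backward direction and the use of the edge-degree constraint to bound $d^{\rho}_{G''}(v)$ by $r$ (including the isolated-vertex case) are fine and are in fact spelled out more carefully than in the paper.

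The gap is in the cost-versus-reduction bookkeeping, precisely the step you flag as ``not entirely mechanical.'' The claim that each edit decreases $d^{\rho}(v)$ by at most its own cost is false for the deletion of a neighbour: deleting $u \in N(v)$ costs $\rho(u)$ but removes $\rho(uv)$ from $d^{\rho}(v)$, and nothing in the problem definition relates $\rho(u)$ to $\rho(uv)$ --- a single cheap neighbour carrying a heavy edge can reduce $d^{\rho}(v)$ by far more than its deletion cost, so the total cost of $\mathcal{E}$ need not exceed $k$ even though the weighted degree drops by more than $k$. The ``accounted for analogously'' remark papers over exactly the case where the amortization breaks. The paper's own proof sidesteps this by counting \emph{operations} rather than weight: it works from the condition $d(v) > k+r$ on the number of neighbours, observes that bringing $v$'s degree down to at most $r$ (or isolating $v$, or deleting every incident edge) requires removing more than $k$ distinct neighbours or incident edges, and notes that each such removal is a separate deletion of cost at least $1$. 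To repair your argument you would need to switch to this counting form (equivalently, note that since all weights are at least $1$, more than $k$ distinct elements of $N(v) \cup \{uv : u \in N(v)\}$ must be touched); the purely weighted-degree amortization cannot be salvaged as written.
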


\begin{proof}
  Assume there is a vertex $v \in V(G)$ with $d(v) > k+r$, then every edge $uv$ has edge-degree at least $k+r+1$ therefore at least $k+1$ vertices or edges must be deleted if we do not delete $v$, but we may only perform at most $k$ deletions.

  Thus $(G,(k,r))$ is a \Yes{}-instance of $\WEDCE^{*}(S)$ if and only if
  $(G',(k',r))$ is a \Yes{}-instance of $\WEDCE^{*}(S)$.
\end{proof}

We now adapt the notion of a clean region~\cite{MT08, MathiesonSzeider12} for $\WEDCE$. An edge $uv$ is clean if $\delta(uv) = d^{\rho}(uv)$. Let $X$ be the set of vertices only incident on clean edges. A clean region $C$ is a maximal connected subgraph of $G$ whose vertices are all in $X$ (or equivalently, a connected component of $G[X]$). We denote the vertices (resp. edges) of $C$ by $V(C)$ (resp. $E(C)$). As before let the $i$-th layer of $C$ be the subset $C_i=\{ c\in V(C) \mid \min_{b \in B} d_G(c,b) = i\}$ where $d_G(c,b)$ denotes the distance between $c$ and $b$ in $G$. Note that all the neighbors of a vertex of layer $C_i$ belong to $C_{i-1}\cup C_i \cup C_{i+1}$.

Reduction Rule~$2$ for $\WDCE^{\ast}$ holds immediately with this new definition of a clean region as independent clean regions will not have any elements deleted in a minimal solution.

\begin{redrule}\label{redrule:disjoint_clean}
  Let $(G,(k,r))$ be an instance of $\WDCE^{*}(S)$, let $C$ be a clean
  region of $G$ with empty boundary $B(C)=\emptyset$, and let $G' = G
  - V(C)$. Then replace $(G,(k,r))$ with $(G',(k,r))$.
\end{redrule}

If any element of a clean region is deleted, then the entire region must be, however with both $\vdel{}$ and $\edel{}$ it is not clear as to the most efficient way to delete a clean region. We do know that if any element from a layer with index greater than $k+1$ is deleted, either there is a solution where that element is not deleted, or there have been more than $k$ deletions.

\begin{redrule}\label{redrule:deep_clean_region}
Let $(G,(k,r))$ be an instance of $\WEDCE^{*}(S)$ and let $C$ be a clean region of $G$ such that $C_{k+2} \neq \emptyset$. Replace $(G,(k,r))$ with $(G',(k,r))$ as follows:
\begin{enumerate}
\item Delete all layers $C_{i}$ where $i \geq k+2$.
\item For each edge $uv$ such that $u \in C_{k+1}$ and $v \in C_{k+1} \cup C_{k}$ set $\delta(uv)$ to $d^{\rho}(uv)$.
\end{enumerate}
\end{redrule}

\begin{claim}\label{claim:deep_clean_for_WEDCE}
Reduction Rule~\ref{redrule:deep_clean_region} is sound for $\WEDCE^{*}(S)$ with $\emptyset \neq S \subseteq \{\vdel{}, \edel{}\}$.
\end{claim}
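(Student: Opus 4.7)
The plan is to prove equivalence of yes-instances between $(G, (k,r))$ and $(G', (k, r))$. The central observation, which relies on the singleton degree lists of the $\WEDCE^{*}$ setting together with strictly positive weights, is the following: if $uv$ is a clean edge (so $\delta(uv) = \{d^{\rho}(uv)\}$) and $uv$ survives in the edited graph, then the solution cannot touch $u$, $v$, or any neighbor of $u$ or $v$. Each deletion strictly decreases the contribution of some incident edge to $d^{\rho}(u)$ or $d^{\rho}(v)$, and the singleton constraint forces the new edge-degree to equal the original exactly; since each individual deduction is non-negative and they sum to zero, each must be zero. Consequently, a valid solution's ``zone of influence'' on the clean region is tightly constrained.

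For the forward direction, given a solution $S$ for $(G, (k, r))$ of cost at most $k$, I would define $S' = S \cap (V(G') \cup E(G'))$, so $\mathrm{cost}(S') \leq \mathrm{cost}(S) \leq k$. To verify that $S'$ solves $(G', (k, r))$: for edges $uv \in E(G')$ not incident to $C_{k+1}$, both $\delta(uv)$ and the $G$-neighborhoods of $u,v$ are unaffected by the trim, so the edge-degree in $G'$ after $S'$ matches that in $G$ after $S$ and inherits validity; for the frozen edges (with $u \in C_{k+1}$ and $v \in C_{k+1} \cup C_k$), either $S$ contains one of $u$, $v$, $uv$, making the constraint vacuous in $G'$ as well, or the edge survives in $G$ after $S$, in which case the central observation forces $S$ to leave $u$, $v$ and all their $G$-neighbors untouched, so the edge-degree in $G'$ after $S'$ equals the pinned value $d^{\rho}_{G'}(uv)$.

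For the backward direction, given $S'$ solving $(G', (k, r))$, take $S = S'$ as a candidate solution for $(G, (k, r))$, whose cost is still at most $k$. Edges in $E(G) \setminus E(G')$ are exactly those with at least one endpoint in a deep layer, and all such edges are clean in $G$. Their deep-layer endpoints are untouched by $S'$ because they are absent from $G'$, and their $C_{k+1}$-endpoints together with those endpoints' neighborhoods are also untouched by the central observation applied to the pinned frozen edges of $G'$. Hence the re-added edges retain their original edge-degrees and satisfy $\delta$. Edges in both $G$ and $G'$ are handled analogously: non-frozen ones inherit validity directly, while for frozen ones the pinned constraint in $G'$ together with the untouched $C_{k+2}$-neighborhoods recovers the original clean constraint in $G$.

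The main obstacle is the bookkeeping at the boundary layer $C_{k+1}$, where the pinned value $d^{\rho}_{G'}(uv)$ differs from the original $d^{\rho}_{G}(uv)$ by exactly the total weight of $u$'s (and when relevant $v$'s) edges to $C_{k+2}$. A careful decomposition of each relevant vertex's weighted degree into its $G'$-part and its $C_{k+2}$-part is needed to confirm that the pinned constraint in $G'$ and the clean constraint in $G$ correspond through precisely the same structural restriction on $S$, namely that the relevant neighborhoods are left untouched.
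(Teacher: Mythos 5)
Your central observation (a clean edge that survives the editing freezes its endpoints and their entire neighbourhoods, because weights are positive, the degree lists are singletons, and only deletions are available) is correct, and your forward direction built on it is fine. The backward direction, however, breaks at exactly the point you dismiss as bookkeeping. You claim that for a solution $S'$ of $(G',(k,r))$ the $C_{k+1}$-endpoints of the removed deep edges ``and their neighbourhoods are untouched by the central observation applied to the pinned frozen edges.'' The central observation only applies to a frozen edge that \emph{survives} $S'$; nothing stops $S'$ from deleting a frozen edge or a vertex of $C_{k+1}$. When that happens, the re-attached neighbours in $C_{k+2}$ lose weighted degree in $G$, their incident clean edges are violated, and $S=S'$ is not a solution for $(G,(k,r))$. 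This case is simply absent from your argument.

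The missing case is not recoverable, because it actually occurs. Take the path $c\,b\,v_1v_2v_3v_4v_5$ with unit weights, $\delta(cb)=d^{\rho}(cb)-1=2$, every other edge clean, and $k=2$. Then $C=\{v_1,\dots,v_5\}$ with $C_i=\{v_i\}$ and boundary $\{b\}$, so the rule deletes $C_4,C_5$ and pins $\delta(v_2v_3)$ to its new value $3$. In $G'$ the set $\{v_1,v_3\}$ is a valid solution of cost $2$: only the edge $cb$ remains, now satisfied, with $v_2$ isolated. But $(G,(2,r))$ is a \No{}-instance, since every way of repairing $cb$ starts a cascade through the clean region that cannot be stopped before $v_5$ within cost $2$. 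So a solution for $G'$ really can delete a vertex of $C_{k+1}$, and it does not lift back to $G$. For comparison, the paper's own proof takes a different route entirely --- it argues that every deleted element lies within distance $\Card{D}\le k$ of $B(C)$ because each connected component of the deleted subgraph must reach the boundary --- but the same example shows that bound is too strong: $v_1$ and $v_3$ are non-adjacent, the effect propagates through the untouched vertex $v_2$, and $k$ deletions can reach roughly $2k$ layers deep. Any correct treatment has to either deepen the cut-off layer or explicitly rule out solutions of $G'$ that touch the new boundary layer; your proposal does neither.
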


\begin{proof}
Let $D$ be the set of vertices and edges deleted in a minimal solution for $(G,(k,r))$. Let $G(D)$ be the subgraph induced by all the vertices of $D$ or incident to edges of $D$. Each connected component $X$ of $G(D)$ that contains an element of a clean region $C$ must also contain an element of the boundary $B(C)$, otherwise we could obtain a solution of lower cost by not deleting $X$. Therefore each vertex $v \in D \cap V(C)$ must be of distance at most $\Card{D}$ from a vertex in $B(C)$, thus there is no vertex $v \in C_{i}$ where $i \geq k+1$. Similarly any endpoint of an edge in $e \in D \cap E(C)$ must belong to some layer $C_{i}$ where $i \leq k+1$. Conversely any solution $D'$ for $(G',(k,r))$ is also a solution for $(G,(k,r))$
\end{proof}

If the editing operations are restricted to $\edel{}$, we can apply the following reduction rule. In this case we may contract the clean regions significantly. Note that here the weights of the vertices are irrelevant.

\begin{redrule}\label{redrule:clean_region_contraction}
Let $(G,(k,r))$ be an instance of $\WEDCE^{*}(S)$. Let $C$ be a clean region with boundary $B(C)$ and $\Card{V(C)} \geq 2$. We replace $(G,(k,r))$ with $(G'(k,r))$ by contracting $C$ to a single edge as follows:
\begin{enumerate}
\item Add two vertices $u$ and $v$ and the edge $uv$.
\item For each $b \in B(C)$ add an edge $bu$ with weight $\rho(bu) = \min(k+1,\sum_{c \in V(C)} \rho (bc))$.
\item For each $b \in B(C)$ set $\delta(bu) = d^{\rho}(b)+d^{\rho}(u)$.
\item Let $\rho(uv) = \min(k+1,\sum_{x,y\in V(C), xy \in C} \rho(xy))$.
\item Let $\delta(uv) = d^{\rho}(u)+d^{\rho}(v) = d^{\rho}(u)+1$.
\item Delete $C$.
\end{enumerate}
\end{redrule}

\begin{claim}\label{claim:RR7_for_WEDCE}
Reduction Rule~\ref{redrule:clean_region_contraction} is sound for $\WEDCE^{*}(\edel{})$.
\end{claim}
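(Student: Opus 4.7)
The plan is to verify soundness by establishing both directions of the biconditional separately: $(G,(k,r))$ is a \Yes{}-instance if and only if $(G',(k,r))$ is, via a case analysis on how a minimum-cost solution interacts with the clean region $C$ and its crossing edges.

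The core technical ingredient is a \emph{cleanness cascade}: because every $\delta(xy)$ is a singleton in $\WEDCE^{\ast}$ and $\edel{}$ can only decrease weighted degrees, any clean edge $xy$ one of whose endpoints loses incident weight becomes non-clean and must itself be deleted. Applied inductively over the connected structure of $C$ and the edges $\{bc : b \in B(C),\, c \in V(C)\}$, this shows that in any minimum solution $D$ for $(G,(k,r))$, deleting a single edge from $E(C) \cup \{bc : b \in B(C),\, c \in V(C)\}$ forces $D$ to delete \emph{every} edge in that set. An analogous cascade on the gadget $\{uv\} \cup \{bu : b \in B(C)\}$ in $G'$ says that deleting any gadget edge forces deletion of the whole gadget, since by construction of $\delta(uv)$ and $\delta(bu)$ each such edge is clean in $G'$.

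For the forward direction I would fix a minimum-cost solution $D$ for $(G,(k,r))$ and split into two cases. If $D$ is disjoint from $E(C) \cup \{bc : b \in B(C),\, c \in V(C)\}$, the construction makes each $b \in B(C)$ carry the same weighted degree in $G$ and $G'$ (via the identity $\rho(bu) = \sum_{c \in V(C)} \rho(bc)$ when the cap does not bind), so every surviving non-gadget edge retains its edge-degree and $D$ transfers to a valid solution on $(G',(k,r))$, with the gadget edges already clean by design. If instead $D$ meets the region, the cascade pins down its deletions on $E(C)$ and the crossing edges exactly, and replacing that entire block by the deletion of $uv$ together with every $bu$ edge produces a valid solution on $(G',(k,r))$ of cost no greater than $D$'s. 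The backward direction is symmetric, using the gadget cascade to translate deletions of $\{uv\} \cup \{bu : b \in B(C)\}$ back into deletions of $E(C)$ and the crossing edges in $G$.

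The principal obstacle is the careful handling of the weight caps $\min(k+1,\cdot)$ on $\rho(bu)$ and $\rho(uv)$. I would address this by observing that whenever a cap binds, the corresponding underlying sum exceeds $k$, so neither the capped gadget edge nor its counterpart bundle in $G$ can be deleted under budget $k$; consequently the cascade branch of the case analysis is infeasible on both sides and the untouched branch matches edge-for-edge via the uncapped identities. Assembling these observations closes out both directions of the soundness proof.
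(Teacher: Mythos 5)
Your argument is essentially the paper's own: both rest on the cleanness cascade showing that, under edge deletion only with singleton degree lists, touching any edge of the clean region (or any boundary-crossing edge) forces deletion of all of them, so the region behaves atomically and can be replaced by a gadget whose capped weight reproduces the all-or-nothing deletion cost. You spell out both directions and the weight caps in more detail than the paper's four-sentence sketch, though your assertion that the untouched branch ``matches edge-for-edge'' even when a cap binds is not quite right --- capping $\rho(bu)$ at $k+1$ lowers $d^{\rho}(b)$ in $G'$ and thereby perturbs the edge-degrees of $b$'s edges outside the region --- a subtlety the paper's proof silently shares rather than resolves.
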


\begin{proof}
Let $(G,(k,r))$ be an instance of $\WEDCE^{*}(\edel{})$ with clean region $C$ having boundary $B(C)$. Let $(G',(k,r))$ be the instance obtained by applying Reduction Rule~\ref{redrule:clean_region_contraction}.

If an edge incident on a vertex of $V(C) \cup B(C)$ is deleted, then an edge of $C$ will no longer be clean. Therefore the edge must be deleted, similarly rendering other edges of $C$ no longer clean. This cascades, and all the edges of the clean region must be deleted with total cost equal to the sum of the cost of all the edges in the clean region. Therefore any solution for $(G,(k,r))$ either deletes all edges in $C$ or none. Therefore $C$ can be represented by a single edge of appropriate weight, which we can limit to $k+1$, as this weight or higher prevents deletion.
\end{proof}

\subsubsection{Kernelization Lemmas}\label{subsubsec:edge_degree_kernelization}

\begin{lemma}\label{lemma:edgedegree_kern1}
Let $\{\vdel{}\} \subseteq S \subseteq \{\vdel{},\edel{}\}$. Let $(G,(k,r))$ be a \Yes{}-instance of $\WEDCE^{*}(S)$ reduced under Reduction Rules~\ref{redrule:large_degree},~\ref{redrule:disjoint_clean} and~\ref{redrule:deep_clean_region}. Then $\Card{V(G)} \leq k(1+(k+r)(1+r^{k+1})) = O(k^2r^{k+1}+kr^{k+2})$. 
\end{lemma}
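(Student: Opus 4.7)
The plan is to decompose $V(G)$ into dirty vertices (those incident to at least one non-clean edge) and the vertex sets of clean regions, and then bound each part separately. Fix a solution $D$ of total cost at most $k$ for the \Yes{}-instance and let $V(D)$ denote the vertices in $D$ together with the endpoints of the edges in $D$; since all weights are at least~$1$, we have $|V(D)| \leq 2k$. By Reduction Rule~\ref{redrule:large_degree} every vertex of $G$ satisfies $d^{\rho}(v) \leq k+r$, and so has at most $k+r$ neighbours.

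The structural key is that every non-clean edge $uv$ has at least one endpoint in $N_{G}[V(D)]$. Indeed, if neither $u$ nor $v$ lies in $V(D)$ and neither has a neighbour in $V(D)$, then no edit of $D$ is incident to $\{u,v\}$, so $d^{\rho}(u)$ and $d^{\rho}(v)$, and hence $d^{\rho}(uv)$, are preserved from $G$ to $G'$; this contradicts $d^{\rho}_{G}(uv) \notin \delta(uv)$ together with $uv$ remaining (or being deleted with both endpoints in $V(D)$). Consequently the set $B$ of dirty vertices lies in $N_{G}^{2}[V(D)]$, and charging each of the at most $k$ edits to its own first neighbourhood yields a bound of the order $k(1+(k+r))$ on $|B|$.

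For the clean side, Reduction Rule~\ref{redrule:disjoint_clean} guarantees that each clean region $C$ has a non-empty boundary $B(C) \subseteq B$, and Reduction Rule~\ref{redrule:deep_clean_region} restricts $V(C)$ to the layers $C_{1}, \ldots, C_{k+1}$. A clean vertex $v$ is incident only to clean edges $uv$, each with $d^{\rho}(uv) \leq r$, so $v$ has weighted (and hence ordinary) degree at most $r$. Combined with the degree bound $k+r$ on the boundary, this gives $|C_{1}| \leq |B(C)|(k+r)$ and $|C_{i}| \leq r|C_{i-1}|$ for $2 \leq i \leq k+1$, so $|V(C)| \leq |B(C)|(k+r)\sum_{i=0}^{k}r^{i} = O(|B(C)|(k+r)r^{k+1})$. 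Summing over all clean regions, together with the dirty bound, recovers $|V(G)| \leq k(1+(k+r)(1+r^{k+1}))$ after each of the $k$ edits is charged once for itself, once per neighbour in $N_{G}[V(D)]$, and once per clean region anchored at each such neighbour.

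The main obstacle is the careful bookkeeping: dirty vertices live in the distance-two ball around $V(D)$ and each can anchor a tree-like clean region of depth $k+1$, so producing the exact coefficient in the lemma (rather than a crude $O(k(k+r)^{3}r^{k+1})$ product of loose bounds) requires attributing each vertex to a single edit in $D$ and avoiding double counting between the two-step reach to the dirty boundary and the layered extension into the clean region. A secondary issue is verifying that the non-clean-edge-endpoint argument and the degree bound on clean vertices go through uniformly for both $S = \{\vdel{}\}$ and $S = \{\vdel{},\edel{}\}$, which follows because the structural claim depends only on $V(D)$ containing all endpoints of edge deletions.
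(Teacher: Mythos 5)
The paper's own ``proof'' is a one-line deferral to Lemma~6.4 of Mathieson and Szeider, whose argument is the partition of $V(G)$ into the deleted set $D$ ($\le k$), the set $H$ of vertices adjacent or incident to $D$ ($\le k(k+r)$ by Reduction Rule~\ref{redrule:large_degree}), and a remainder consisting of clean-region vertices bounded via the layer structure. Your decomposition is the same in spirit, and your structural observations are sound and in one respect sharper than the paper's: you correctly notice that for \emph{edge}-degree constraints dirtiness is not confined to $N_G[V(D)]$ but reaches distance two, because deleting a vertex $w$ perturbs $d^{\rho}(y)$ for every $y\in N(w)$ and hence the edge-degree of every edge incident to such a $y$.

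The gap is in the counting, and it is genuine. From $B \subseteq N_G^{2}[V(D)]$ you cannot conclude $\Card{B} = O(k(1+(k+r)))$: the distance-two ball has size $O(k(k+r)^{2})$, and indeed a single vertex deletion can render dirty up to $(k+r)\cdot r$ edges whose far endpoints lie at distance two, so the dirty set really can exceed $k(1+(k+r))$. Feeding your (inflated) dirty bound into your clean-region estimate $\Card{V(C)} \le \Card{B(C)}(k+r)\sum_{i=0}^{k}r^{i}$ gives, as you yourself concede, only $O(k(k+r)^{3}r^{k+1})$, and the closing paragraph that is supposed to recover the stated coefficient is an assertion of a charging scheme, not a proof of one. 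The ingredient you are missing is that in the edited graph every surviving edge has edge-degree at most $r$, so every surviving non-isolated vertex has at most $r$ surviving neighbours; this is what bounds the number of distance-two dirty vertices by $r\Card{H}$ rather than $(k+r)\Card{H}$, and bounds the first layer of each clean region by $r$ per boundary vertex rather than $k+r$. With that substitution the count closes to $O(k(k+r)r^{k+1}) = O(k^{2}r^{k+1}+kr^{k+2})$ (the asymptotic form claimed), whereas without it your three displayed bounds do not assemble into $k(1+(k+r)(1+r^{k+1}))$, and no amount of ``attributing each vertex to a single edit'' is exhibited that would make them do so.
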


\begin{proof}
The proof runs identically to Lemma~$6.4$ in~\cite{MathiesonSzeider12}. 
\end{proof}

For $\WEDCE^{*}(\edel{})$ we can do much better.

\begin{lemma}\label{lemma:edgedegree_kern2}
Let $(G,(k,r))$ be a \Yes{}-instance of $\WEDCE^{*}(\edel{})$ reduced under Reduction Rules~\ref{redrule:large_degree},~\ref{redrule:disjoint_clean} and~\ref{redrule:clean_region_contraction}. Then $\Card{V(G)} \leq 2k + 4kr = O(kr)$. 
\end{lemma}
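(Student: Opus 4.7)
The plan is to parallel the kernelization template of Lemma~6.4 of~\cite{MathiesonSzeider12}, but exploit the much stronger Reduction Rule~\ref{redrule:clean_region_contraction}, which collapses each clean region to a constant-sized gadget rather than merely pruning layers beyond depth $k+1$. Fix a minimum solution $D \subseteq E(G)$; since $\sum_{e \in D}\rho(e) \le k$ and $\rho \ge 1$ on every edge, we have $|D| \le k$ and hence $|V(D)| \le 2k$. These endpoints account for the $2k$ summand of the claimed bound.

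The central structural observation is that every clean edge $uv$ of $G$ satisfies $d^\rho(u)+d^\rho(v) = d^\rho(uv) = \delta(uv) \le r$, so every vertex incident to a clean edge has weighted degree at most $r$. Applying this to $G-D$: for each $v \in V(D)$ with any surviving incident edge, $d^\rho_{G-D}(v) \le r$, so $d^\rho_G(v) \le r + W_v$ where $W_v = \sum_{e \in D,\, v \in e}\rho(e)$; for $v$ with no surviving incident edge, $d^\rho_G(v) = W_v$. In both cases $d_G(v) \le r + W_v$, and summing using $\sum_{v \in V(D)} W_v = 2\sum_{e \in D}\rho(e) \le 2k$ yields $\sum_{v \in V(D)} d_G(v) \le 2kr + 2k$. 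Thus the number of vertices adjacent to $V(D)$ but outside $V(D)$ is $O(kr)$, which delivers the $4kr$ summand up to constants.

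To finish, I would argue that every remaining vertex of $G$ lies either in $V(D) \cup N(V(D))$ or inside a contracted clean region attached to it. Every unclean edge of $G$ has at least one endpoint in $V(D)$ (otherwise deleting $D$ cannot affect its edge-degree), so a vertex $v \notin V(D)\cup N(V(D))$ has only clean incident edges and lies in some clean region $C$; by Reduction Rule~\ref{redrule:disjoint_clean} the boundary $B(C)$ is non-empty, and every $b \in B(C)$ has a clean incident edge (hence weighted degree $\le r$) and lies in $V(D) \cup N(V(D))$. Reduction Rule~\ref{redrule:clean_region_contraction} caps $|V(C)| \le 2$, so charging each clean region to a distinct boundary-edge that is incident to $V(D) \cup N(V(D))$, and using the already-computed $\sum_{v \in V(D)} d_G(v) \le 2kr+2k$ to absorb the multiplicity, gives the final bound $|V(G)| \le 2k + 4kr$. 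The main technical obstacle is this last charging step: a naive count that multiplies $|N(V(D))|$ by the maximum number of clean regions per boundary vertex introduces a spurious factor of $r$, producing an $O(kr^2)$ bound; this is avoided by routing the charge through clean boundary-edges (each region contributes at least one) and appealing directly to the degree sum over $V(D)$ rather than iterating over $|N(V(D))|$ and $r$ separately.
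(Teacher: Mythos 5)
Your setup matches the paper's: $\Card{V(D)} \leq 2k$, clean edges force weighted degree at most $r$ on their endpoints (so vertices of $V(D)$ have degree at most $r$ in $G-D$), and Reduction Rules~\ref{redrule:disjoint_clean} and~\ref{redrule:clean_region_contraction} cap surviving clean regions at two vertices with non-empty boundary. The paper's own count, however, is a direct two-tier one: it sets $H=V(D)$, $X=V(G)\setminus H$, and bounds $\Card{X}\leq 2r\Card{H}$ in one step by counting the at most $r$ neighbours in $X$ of each $H$-vertex and doubling for the clean-region partner. Your three-tier decomposition --- $V(D)$, then $N(V(D))$, then clean regions attached further out --- is where the proposal breaks down.

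The final charging step, which you yourself flag as the main obstacle, is not actually carried out, and the fix you sketch does not work. A clean region $C$ can have its entire boundary $B(C)$ contained in $N(V(D))\setminus V(D)$: a boundary vertex $b$ is only guaranteed to be incident to an \emph{unclean} edge, and the endpoint of that edge forced into $V(D)$ may be the other endpoint rather than $b$. (Concretely, take a path $v\,b\,w\,z$ with $\delta(vb)=3$ clean, $\delta(bw)=3\neq 4$ unclean, and the solution $D=\{wz\}$: then $\{v\}$ is a clean region whose sole boundary vertex $b$ lies in $N(V(D))$ but not in $V(D)$.) For such a region the clean boundary-edge $bc$ you propose to charge has neither endpoint in $V(D)$, so it contributes nothing to $\sum_{v\in V(D)}d_G(v)$, and that sum cannot absorb the charge. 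You are then left multiplying $\Card{N(V(D))}=O(kr)$ by the up-to-$r$ clean boundary-edges per shell vertex, which is exactly the $O(kr^2)$ estimate you were trying to avoid. To reach $2k+4kr$ you need the stronger structural claim the paper's argument rests on --- that every vertex outside $V(D)$ either is adjacent to $V(D)$ or shares a size-at-most-two clean region with such a vertex --- and your proposal neither states nor establishes it.
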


\begin{proof}
Let $D$ be the set of edges deleted in the solution. Let $H$ be the set of vertices incident to elements of $D$ and let $X$ be the remaining vertices of the graph . $\Card{D} \leq k$ by definition. As $\edel{}$ is the only operation, $D$ consists entirely of edges. Therefore $H \leq 2\cdot\Card{D}\leq 2k$.

\begin{claim}
$\Card{X} \leq 2r\cdot\Card{H}$.
\end{claim}

As $G-D$ is clean, the edges of $G-D$ incident to vertices of $H$ must have edge-degree at most $r$. Therefore the vertices of $H$ have degree at most $r$ in $G-D$ and all neighbours in $X$ must be vertices of clean regions. Furthermore Reduction Rule~$7$ any clean regions in the graph have at most $2$ vertices. The claim follows.

As $\Card{V(G)} = \Card{H}=\Card{X}$, $\Card{V(G)} \leq 2k + 4kr$.
\end{proof}

Combining these kernelizations, we obtain the following Theorem:

\begin{theorem}\label{thm:WEDCE_FPT_by_kern}
Let $\emptyset \neq S \subseteq \{\vdel{}, \edel{}\}$. $\WEDCE^{*}(S)$ is fixed-parameter tractable with a kernel with $O(k^2r^{k+1}+kr^{k+2})$ vertices. If $S=\{\edel{}\}$, the kernel has $O(kr)$ vertices.
\end{theorem}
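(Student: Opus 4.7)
The plan is to observe that the theorem is essentially an assembly of the preceding kernelization machinery, so the proof will amount to composing the reduction rules with the size bounds established in the two lemmas and then invoking the standard ``kernelization implies \FPT{}'' equivalence.

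First I would handle the general case $\{\vdel{}\} \subseteq S \subseteq \{\vdel{},\edel{}\}$. Given an instance $(G,(k,r))$, the algorithm exhaustively applies Reduction Rules~\ref{redrule:large_degree}, \ref{redrule:disjoint_clean} and~\ref{redrule:deep_clean_region}. Each rule is sound by Claim~\ref{claim:RR1_for_WEDCE}, the standard argument for detached clean regions (which carries over unchanged from the $\WDCE$ setting), and Claim~\ref{claim:deep_clean_for_WEDCE} respectively. Each application strictly decreases $|V(G)|$ (Rules~\ref{redrule:large_degree} and~\ref{redrule:disjoint_clean}) or strictly decreases the combined depth of the clean regions while not increasing vertex count (Rule~\ref{redrule:deep_clean_region}), and every individual rule can be executed in polynomial time; hence exhaustive reduction terminates in polynomial time. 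Once the instance is reduced under these three rules, Lemma~\ref{lemma:edgedegree_kern1} bounds the number of vertices by $O(k^2r^{k+1}+kr^{k+2})$, and the edge count is at most $\binom{|V(G)|}{2}$, so the entire instance has size polynomial in $k+r$.

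For the second case $S=\{\edel{}\}$, I would instead apply Reduction Rules~\ref{redrule:large_degree}, \ref{redrule:disjoint_clean} and~\ref{redrule:clean_region_contraction}; the last of these is sound by Claim~\ref{claim:RR7_for_WEDCE}, and it replaces any clean region of size $\geq 2$ with a two-vertex gadget, so each application strictly reduces the vertex count and the procedure again terminates in polynomial time. Lemma~\ref{lemma:edgedegree_kern2} then yields the sharper $O(kr)$ vertex bound on the reduced instance.

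Finally, I would invoke the standard equivalence between kernelization and fixed-parameter tractability for decidable problems (see e.g.\ \cite{DowneyF13,FG06}): since in both cases the reduced instance is equivalent to the original, has size bounded by a computable function of the parameter $k+r$, and is produced in polynomial time, $\WEDCE^{*}(S)$ lies in \FPT{} with parameter $k+r$ and admits the claimed kernel sizes. The only mildly delicate point, and the one I would be most careful about, is ensuring that the rules can be applied in any order without one rule invalidating the preconditions of another (in particular that contracting or pruning a clean region does not spuriously create a high-degree vertex violating Rule~\ref{redrule:large_degree}); this is handled by iterating all rules to a fixed point, which is safe because each application strictly decreases a well-founded measure (the lexicographic pair of vertex count and clean-region weight).
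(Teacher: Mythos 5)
Your proposal is correct and matches the paper's intent exactly: the paper gives no separate proof of this theorem, introducing it only with ``Combining these kernelizations, we obtain the following Theorem,'' so the argument is precisely the composition of Reduction Rules~\ref{redrule:large_degree}--\ref{redrule:clean_region_contraction} with their soundness claims, the size bounds of Lemmas~\ref{lemma:edgedegree_kern1} and~\ref{lemma:edgedegree_kern2}, and the standard kernelization-implies-\FPT{} equivalence. Your added care about termination and the order of rule application is a harmless (and if anything welcome) elaboration of what the paper leaves implicit.
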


\section{Edge Regular Graphs}\label{sec:edge_regular}

The $\W[1]$-hardness and $\NP{}$-completeness of $\WERE^{*}_{\plain}(S)$ and subsequently $\WERE(S)$ with $\{\vdel{}\} \neq S \subseteq \{\vdel{}, \edel{}, \eadd{}\}$ follow immediately from Theorem~$3.3$ in~\cite{MathiesonSzeider12}, as the solution required for the proof must be edge regular.

\subsection{A Bounded Search Tree Algorithm for WERE($\vdel{},\edel{}$)}\label{subsec:BST_WERE}

If any vertex $v$ has $d^{\rho} < h$ for all $h \in \delta(v)$ it can only be deleted, and $k$ reduced by $\rho(v)$. Then the algorithm is:

\begin{enumerate}
\item If $k\geq 0$ and for every vertex $v$ we have $d^{\rho}(v) \in \delta(v)$ and for every edge $uv$ we have $\Card{N(u)\cap N(v)} \in \nu(u,v)$ answer \Yes{}. If $k \leq 0$ and there exists a vertex $v$ with $d^{\rho}(v) \notin \delta(v)$ or an edge $uv$ with $\Card{N(u)\cap N(v)} \notin \nu(u,v)$ answer \No{}.
\item Choose a vertex $v$ with $d^{\rho}(v) \notin \delta(v)$, or incident to an edge $uv$ with $\Card{N(u)\cap N(v)} \notin \nu(u,v)$.
\item Arbitrarily select a set $M$ of at most $r+1$ vertices from $N(v)$. Let $E_{M}$ denote the edges incident on $v$ with the other endpoint in $M$.
\item Branch on all possibilities of deleting, $v$, one element $x$ of $M$ or reducing the weight of one edge in $E_{M}$ by $1$. Reduce $k$ by the weight of the deleted element in the first two cases, and by one if an edge is reduced in weight.
\item Return to step 1.
\end{enumerate}

The branching set for this problem is more complex. In both cases we choose an element from $V(G)\cup E(G)$. If the element chosen is a vertex $v$ with $d^{\rho}(v) \notin \delta(v)$, it consists of $v$, at most $r+1$ neighbours of $v$ and the edges between $v$ and the chosen neighbours. If the element chosen is an edge $uv$ with $\Card{N(u)\cap N(v)} \notin \nu(u,v)$, it consists of $uv$, $u$, $v$, at most $r+1$ neighbours of $u$ and $v$, and the edges between $u$ and $v$ and the chosen neighbours. Therefore the tree has at most $\tr(3r+6,k)=((3r+6)^{k+1}-1)/(3r+5)$ vertices. As with previous cases, restricting the available operations to $\vdel{}$ gives a smaller bound on the tree size. With only $\vdel{}$ the branching set is reduced to at most $r+3$ elements, so the maximum number of vertices in the tree is reduced to $\tr(r+3,k)=((r+3)^{k+1}-1)/(r+2)$.

\begin{lemma}
$\WERE(S)$ is fixed-parameter tractable with paramter $(k,r)$ with $\emptyset \neq S \subseteq \{\vdel{}, \edel{}\}$.
\end{lemma}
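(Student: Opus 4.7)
The plan is to verify that the bounded search tree algorithm described above is correct and runs in time $f(k,r) \cdot n^{O(1)}$, thereby establishing fixed-parameter tractability with parameter $(k,r)$. The overall structure mirrors the Moser--Thilikos style arguments already used for $\WDCE$: a single vertex or edge violating its local constraint is identified, and we enumerate a bounded list of local modifications, at least one of which must be performed in any solution that keeps the violating object. A preprocessing step that immediately deletes any vertex $v$ with $d^{\rho}(v) < \min \delta(v)$ (at cost $\rho(v)$, reducing $k$ accordingly) is sound because neither $\vdel$ nor $\edel$ can raise a vertex's weighted degree, so such a $v$ can never satisfy its degree list unless deleted.

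For the vertex branching case, assume $v$ is chosen with $d^{\rho}(v) \notin \delta(v)$; after preprocessing this means $d^{\rho}(v) > \max \delta(v) \leq r$. In any solution the surviving weighted degree of $v$ (if $v$ is kept) is at most $r$, so at least one incident edge must either be deleted, have its weight reduced, or have its other endpoint deleted. Thus choosing any $M \subseteq N(v)$ with $|M| = r+1$, the branching set $\{v\} \cup M \cup E_M$ (together with weight-reduction moves on edges in $E_M$) must intersect any solution that retains $v$: otherwise all $r+1$ vertices of $M$ remain adjacent to $v$ in $G - D$, forcing $d(v) \geq r+1 > r$, a contradiction.

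For the edge branching case, suppose the trigger is an edge $uv$ with $\Card{N(u) \cap N(v)} \notin \nu(u,v)$. If the value is too small, no application of $\vdel$ or $\edel$ can increase the common-neighbour count, so $u$, $v$, or $uv$ itself must be removed, all of which lie in the branching set. If the value is too large, then any solution keeping $u$, $v$, $uv$ must reduce the common neighbourhood, which requires deleting some common neighbour $w$ or one of the edges $uw$, $vw$. The key observation is that in any solution we have $d(u), d(v) \leq r$, so the surviving neighbourhoods of $u$ and $v$ together form a set of size at most $2r$; selecting $r+1$ vertices from $N(u) \cup N(v)$ together with the $\leq 2(r+1)$ edges to $u$ and $v$ is then sufficient, because if no element of the branching set is touched, either $u$ or $v$ retains strictly more than $r$ neighbours and violates its own degree bound. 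This edge-trigger analysis is the main obstacle, since the constraint is on the interaction of two neighbourhoods rather than a single vertex; tying its correctness to the degree cap (enforced either as an active constraint or by an earlier vertex-branch step) is what makes the arbitrary choice of $r+1$ neighbours provably exhaustive.

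Finally, the tree has depth at most $k$ since every branch decreases $k$ by at least $1$, and the branching factor is at most $3r+6$, counting $uv$, $u$, $v$, the $r+1$ chosen neighbours, and the up to $2(r+1)$ incident edges. This yields at most $\tr(3r+6, k) = ((3r+6)^{k+1}-1)/(3r+5)$ leaves, and at each node the violation detection and update run in polynomial time, giving the stated fixed-parameter algorithm for $(k,r)$. When $S = \{\vdel\}$ the branching set shrinks to $\{v\} \cup M$ of size at most $r+3$, yielding the improved tree size $\tr(r+3,k)$ as noted.
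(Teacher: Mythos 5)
Your preprocessing step and your vertex-trigger analysis are essentially right (with one small slip: after preprocessing, $d^{\rho}(v)$ need not exceed $\max\delta(v)$ --- it may fall in a gap of the list $\delta(v)$ --- but in that case $d(v)\leq r$, so $M$ is all of $N(v)$ and the branching set already contains every incident edge and neighbour, and the conclusion survives). The genuine gap is in the edge-trigger case, specifically the sentence ``if no element of the branching set is touched, either $u$ or $v$ retains strictly more than $r$ neighbours.'' An arbitrary $M\subseteq N(u)\cup N(v)$ with $\Card{M}=r+1$ can be split between $N(u)\setminus N(v)$ and $N(v)\setminus N(u)$, in which case each of $u$ and $v$ retains only roughly $(r+1)/2$ of the selected vertices and neither degree bound is violated; the pigeonhole you invoke would require $2r+1$ selected vertices, not $r+1$. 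Worse, a solution can genuinely escape the branching set: take $r=5$, $d(u)=d(v)=5$, a single common neighbour $w$, $\nu(u,v)=\{0\}$, $\delta(u)=\{4,5\}$, $\delta(v)=\{5\}$, and rigid singleton degree lists on the other neighbours of $v$, so that the unique solution with $k=1$ is to delete the edge $uw$ (deleting $u$, $v$, $uv$, $w$ or $vw$ all drop $d(v)$ to $4\notin\delta(v)$ or break a neighbour's list). Here $N(u)\cup N(v)\setminus\{u,v\}$ has seven elements, so the arbitrary $M$ of size $r+1=6$ may omit $w$; then none of $w$, $uw$, $vw$ lies in $\{u,v,uv\}\cup M\cup E_M$, every branch dead-ends, and the algorithm answers \No{} on a \Yes{}-instance.

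The repair is to aim the branching at the common neighbourhood: when $\Card{N(u)\cap N(v)}>\max\nu(u,v)$, the only deletions that can help besides $u$, $v$, $uv$ are a common neighbour $w\in N(u)\cap N(v)$ or an edge $uw$, $vw$ to such a $w$, so $M$ should be chosen from $N(u)\cap N(v)$ rather than from $N(u)\cup N(v)$. Then either $M=N(u)\cap N(v)$ and the branching set is exhaustive, or $\Card{M}=r+1$ with $M\subseteq N(u)$, and an untouched branching set forces $d(u)\geq r+1>r\geq\max\delta(u)$ --- the contradiction you wanted. This keeps the branching factor at $3r+6$ and hence the tree bound $\tr(3r+6,k)$; the subcase $\Card{N(u)\cap N(v)}<\min\nu(u,v)$ you already handle correctly, since only $u$, $v$, $uv$ can help there. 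For comparison, the paper itself only states the algorithm (with an arbitrarily chosen $M$) and counts the branching set, offering no correctness argument, so supplying one is exactly the right instinct --- but as written your justification of the $r+1$ bound does not go through, and the selection rule has to change for it to.
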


\subsection{A Kernelization for WERE$^{*}(\vdel{}, \edel{})$}\label{subsec:kernel_WERE}

As before the $\WDCE^{*}(\vdel{}, \edel{})$ kernelization can be adapted for $\WERE^{*}(\vdel{}, \edel{})$ using an adapted definition of a clean region. As we are dealing with singleton sets for the degree function $\delta$ (resp. the neighbourhood function $\nu$) we write $d = \delta(v)$ (resp. $d = \nu(u,v)$) instead of $d \in \delta(v)$ (resp. $d \in \nu(u,v)$).

\subsubsection{Reduction Rules}

As an edge regular graph is by definition regular, Reduction Rule ~\ref{redrule:large_degree} (\emph{q.v.} Section~\ref{subsubsec:edge_degree_reduction_rules}) applies immediately.

We redefine a clean region as a maximal connected set $C\subseteq V(G)$ of vertices such that for every vertex $v \in C$ we have $d^{\rho}(v)=\delta(v)$ and for every edge $uv \in E(G)$ incident to $v$ we have $\Card{N(u)\cap N(v)} = \nu(u,v)$. 
The layers of $C$ are defined as in Section~\ref{subsubsec:edge_degree_reduction_rules}. We note that it follows immediately that for $u \in B(C)$ and $v \in C$ if $uv \in E$ then $N(u) \cap N(v) \subseteq B(C) \cup C_{1}$.

With this modified definition of a clean region, Reduction Rule~\ref{redrule:disjoint_clean} (\emph{q.v.} Section~\ref{subsubsec:edge_degree_reduction_rules}) also applies with no change.

We now only require an appropriate alternative for Reduction Rule~\ref{redrule:clean_region_contraction}.

\begin{redrule}\label{redrule:edge_regular_contraction}
Let $(G,(k,r))$ be an instance of $\WERE^{*}(S)$. Let $C$ be a clean region with boundary $B(C)$. We replace $(G,(k,r))$ with $(G',(k,r))$ by shrinking $C$ to a single layer of vertices as follows:

\begin{enumerate}
\item Delete all vertices $v \in C\setminus C_{1}$.
\item Add all edges $uv$ where $u,v \in C_{1}$.
\item For all edges $uv$ where $u,v \in C_{1}$, set $\nu(u,v)$ to $\Card{N(u)\cap N(v)}$.
\item For all vertices $v \in C_{1}$, set $\delta(v)$ to $d^{\rho}(v)$.
\item Arbitrarily choose a vertex $v \in C_{1}$ and set $\rho(v)$ to $\min(k+1, \rho(v)+\sum_{u \in C\setminus C_{1}} \rho(u))$.
\end{enumerate}
\end{redrule}

\begin{claim}\label{claim:RR8_for_WERE}
Reduction Rule~\ref{redrule:edge_regular_contraction} is sound for $\WERE^{*}(S)$ with $\{\vdel{}\} \subseteq S \subseteq \{\vdel{}, \edel{}\}$.
\end{claim}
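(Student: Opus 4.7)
My plan is to verify soundness by first checking that in $G'$, the layer $C_1$ --- now furnished with its clique edges, the updated constraints $\nu$ and $\delta$ of steps~$3$ and~$4$, and the reweighted vertex $v^\ast$ --- is again a clean region whose boundary is still $B(C)$. The internal $C_1$--$C_1$ edges and the vertex weighted degrees are clean in $G'$ directly by construction; step~$5$ alters only the vertex weight of $v^\ast$ and so does not affect $d^\rho$, which sums edge weights only; and the co-degree of any boundary edge $bv$ with $b \in B(C)$, $v \in C_1$ is preserved because the note preceding the rule establishes that $N_G(b) \cap N_G(v) \subseteq B(C) \cup C_1$, so removing the deeper layers $C \setminus C_1$ does not affect these common neighbourhoods.

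Next I would establish a cascade lemma for singleton constraints: in both $G$ and $G'$, deleting a vertex of the clean region, an edge incident to it, or an adjacent boundary vertex strictly lowers the weighted degree of some region vertex and thereby falsifies its singleton $\delta$-constraint. Because the editing palette contains only $\vdel{}$ and $\edel{}$, that vertex has to be deleted in turn, and the deficit propagates through the connected region. Hence every valid solution either leaves the region and its boundary incidence intact, or deletes the entire vertex set of the region.

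With this dichotomy the biconditional follows routinely. Given a cost-$\le k$ solution $D$ for $G$: in the untouched case, $D$ contains no operation on $V(C)$ or on the relevant boundary, and since $G$ and $G'$ agree outside $C$ while the new $C_1$ is also left alone, $D$ itself is a solution for $G'$ of the same cost. In the delete-$V(C)$ case, replace the $C$-part of $D$ by $V(C_1)$; when the $k+1$ cap in step~$5$ is not active we have $\rho_{G'}(v^\ast) + \sum_{u \in C_1\setminus\{v^\ast\}} \rho(u) = \sum_{u \in V(C)} \rho(u)$, so the cost transfers exactly, and when the cap is active the original deletion cost already exceeds $k$, contradicting validity of $D$. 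The reverse direction is symmetric, noting that an activated cap would force the cost of deleting $V(C_1)$ in $G'$ to exceed $k$, so only the untouched case can arise and $D'$ lifts verbatim to $G$.

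The main obstacle I anticipate is the cascade argument: one must rule out any partial deletion of $V(C)$ being repaired without edge additions, for which the singleton-ness of $\delta$ and $\nu$ is essential, and one must also confirm that the freshly added clique edges in $G'$ do not create any new partial-deletion repair in $G'$ either. The weight-cap bookkeeping then follows the pattern of the analogous rule for \WDCE{} in the referenced prior work.
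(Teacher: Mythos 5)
Your proposal is correct and follows essentially the same route as the paper's proof: the all-or-nothing cascade forced by the singleton constraints, the observation that the contracted layer $C_1$ is again a clean region whose deletion cost matches the original (or is capped above $k$), and the use of the fact that $N(b)\cap N(v)\subseteq B(C)\cup C_1$ for boundary edges to see that the $\nu$-constraints at the boundary are unaffected. You simply spell out in more detail the case analysis and the weight-cap bookkeeping that the paper's proof states in compressed form.
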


\begin{proof}
As with Reduction Rule~\ref{redrule:clean_region_contraction}, if a neighbour, incident edge, or vertex of clean region is deleted, the entire clean region must be deleted. As the replacement clean region is connected, $\nu$ is satisfied for all vertices in $C$, and the weight is the same or $k+1$, this property and the cost of deletion is preserved.

As all vertices in $B(C)$ have their neighbours in the clean region confined to $C_{1}$, the retention of $C_{1}$ ensures that the neighbourhoods with regard to $\nu$ are also preserved.
\end{proof}

\subsubsection{Kernelization Lemma}

\begin{lemma}\label{lem:were_kernel}
Let  $\{\vdel{}\} \subseteq S \subseteq \{\vdel{}, \edel{}\}$. Let $(G,(k,r))$ be a \Yes{}-instance of $\WERE^{*}(S)$ reduced under Reduction Rules~\ref{redrule:large_degree},~\ref{redrule:disjoint_clean} and~\ref{redrule:edge_regular_contraction}. Then $\Card{V(G)} \leq k + k(k+r) + kr(k+r) = O(kr(k+r))$. 
\end{lemma}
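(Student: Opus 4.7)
The plan is to mirror the kernelization argument for $\WDCE^{*}(\vdel{},\edel{})$ (Lemma~6.4 of~\cite{MathiesonSzeider12}), using the revised clean region definition that encodes both the degree function $\delta$ and the neighbourhood function $\nu$, together with Reduction Rule~\ref{redrule:edge_regular_contraction} in place of the original clean region collapse. First I would take a reduced \Yes{}-instance $(G,(k,r))$ and fix a minimum solution $D$ of total edit cost at most $k$, and partition $V(G)$ into three disjoint sets: the deleted vertices $D_{V}$; the \emph{boundary} $B$ of non-deleted vertices that are either endpoints of a deleted edge or neighbours of a deleted vertex; and the remaining vertices, each of which must have all of its $\delta$- and $\nu$-constraints satisfied in $G$ already and therefore lies in some clean region.

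Next I would bound the three sets independently. The bound $|D_{V}| \leq k$ is immediate since weights are positive integers and total edit cost is at most~$k$. For $B$, Reduction Rule~\ref{redrule:large_degree} gives $d^{\rho}(v) \leq k+r$ at every vertex, so each deleted vertex contributes at most $k+r$ boundary vertices and each deleted edge contributes at most~$2$; combined with $|D_{V}| + |D_{E}| \leq k$ this yields $|B| \leq k(k+r)$. For the remaining vertices, Reduction Rule~\ref{redrule:disjoint_clean} rules out clean regions with empty boundary, so every surviving clean vertex has a neighbour in $B$, and Reduction Rule~\ref{redrule:edge_regular_contraction} has collapsed each clean region to its first layer $C_{1}$. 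Any $b \in B$ survives into the final graph, where $d^{\rho}(b) \in \delta(b) \subseteq [r]$ together with positive integer weights forces $|N(b)| \leq r$ there; since the clean-region neighbours of $b$ are untouched by a minimal solution (any edit into a clean region would cascade, contradicting minimality), all these neighbours are present in the final graph, so $b$ has at most $r$ clean-region neighbours in $G$ itself. Hence the collapsed clean region contributes at most $r|B| \leq kr(k+r)$ vertices. Summing the three bounds gives $|V(G)| \leq k + k(k+r) + kr(k+r)$, as claimed.

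The only non-routine step is the final one: one must translate the final-graph degree cap of $r$ at $b \in B$ into a bound on how many clean-region neighbours $b$ has in the reduced graph~$G$. This hinges on the all-or-nothing property of clean regions under minimal solutions (the same observation underlying Claim~\ref{claim:RR8_for_WERE}); once it is in hand, everything else is accounting that transfers essentially verbatim from the $\WDCE^{*}(\vdel{},\edel{})$ argument, with the only new ingredient being that the enlarged clean region definition still rules out deletions from a clean region in a minimal solution, thanks to the cascading violations of $\nu$-constraints forced whenever any edge in the region is removed.
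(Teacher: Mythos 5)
Your proposal is correct and follows essentially the same route as the paper's proof: the same three-way partition (the paper's $D$, $H$, $X$ are your deleted elements, boundary $B$, and surviving clean vertices), the same bounds $\Card{D}\leq k$, $\Card{H}\leq k(k+r)$ via Reduction Rule~\ref{redrule:large_degree}, and the same final claim $\Card{X}\leq r\Card{H}$ obtained from the degree cap $r$ in the edited graph together with Reduction Rules~\ref{redrule:disjoint_clean} and~\ref{redrule:edge_regular_contraction} forcing every surviving clean vertex to be adjacent to the boundary. Your extra care in justifying why boundary vertices have at most $r$ clean-region neighbours in $G$ itself is a slightly more explicit rendering of the step the paper states tersely, not a different argument.
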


\begin{proof}
We partition $G$ into disjoint sets $D$, $H$ and $X$ where $D$ is the set of vertices and edges deleted to obtain the edge regular graph, $H$ is the set of vertices incident or adjacent to elements of $D$ that are not themselves in $D$ and $X$ is the set of remaining vertices in neither $D$ nor $H$. $H$ separates $D$ from $X$. By definition $\Card{D} \leq k$.

We also have $\Card{H} \leq \Card{D}\cdot(k+r)$.
\begin{claim}\label{claim:were_X_size}
$\Card{X} \leq r\cdot\Card{H}$.
\end{claim}

As removing $D$ leaves the graph edge regular, the vertices of $H$ can have at most $r$ neighbours outside of $D$, in particular they can have at most $r$ neighbours in $X$. Furthermore all vertices in $X$ must belong to some clean region in the original graph, therefore by Reduction Rules~$2$ and $8$, all the vertices in $X$ must be adjacent to some vertex in $H$. The claim follows.

We have $\Card{V(G)} \leq \Card{D} + \Card{H} + \Card{X} \leq k + k(k+r) + kr(k+r)$.
\end{proof}

Subsequently we have the following:

\begin{lemma}\label{lem:WERE_FPT_by_kern}
$\WERE^{*}(S)$ is fixed parameter-tractable with parameter $(k,r)$ where $\{\vdel{}\} \subseteq S \subseteq \{\vdel{}, \edel{}\}$.
\end{lemma}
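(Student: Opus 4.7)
The plan is to observe that this lemma follows by combining the soundness of Reduction Rules~\ref{redrule:large_degree}, \ref{redrule:disjoint_clean} and~\ref{redrule:edge_regular_contraction} (established in Claim~\ref{claim:RR1_for_WEDCE}, by inspection, and Claim~\ref{claim:RR8_for_WERE} respectively, noting that Rule~\ref{redrule:large_degree} transfers to $\WERE^{*}$ since every edge regular graph is regular) with the size bound provided by Lemma~\ref{lem:were_kernel}. This is a standard kernelization argument: apply the reduction rules exhaustively to obtain an equivalent instance whose size is bounded by a function of $k+r$, then decide the reduced instance by any finite procedure --- for instance the bounded search tree of Section~\ref{subsec:BST_WERE}.

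First I would argue that each of the three reduction rules can be tested for applicability and applied in polynomial time. Rule~\ref{redrule:large_degree} requires only computing weighted degrees and comparing them to $k+r$. Rule~\ref{redrule:disjoint_clean} requires identifying clean regions, which amounts to computing the connected components of the subgraph induced by the vertices all of whose incident edges and neighbourhood intersection constraints are currently satisfied; this is a standard linear scan. Rule~\ref{redrule:edge_regular_contraction} then requires identifying $C_1$ within a clean region, inserting the edges within $C_1$, and updating $\delta$, $\nu$ and $\rho$ locally; all of these are polynomial-time operations on the incidence structure of $G$.

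Next I would verify termination of exhaustive reduction. Each application of Rule~\ref{redrule:large_degree} strictly decreases $\Card{V(G)}$ and $k$; Rule~\ref{redrule:disjoint_clean} strictly decreases $\Card{V(G)}$; and Rule~\ref{redrule:edge_regular_contraction} strictly decreases $\Card{V(G)}$ unless $C = C_1$, in which case it is inapplicable (no vertex is removed and the rule has nothing to do). Moreover the $\min(k+1, \cdot)$ cap in Rule~\ref{redrule:edge_regular_contraction} prevents weights from blowing up. Hence exhaustive reduction terminates after $O(\Card{V(G)} + \Card{E(G)})$ rounds, and by soundness the resulting instance $(G^{\star}, (k^{\star}, r))$ is equivalent to the original.

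Finally, Lemma~\ref{lem:were_kernel} applied to $(G^\star,(k^\star,r))$ gives $\Card{V(G^\star)} = O(kr(k+r))$, so the reduced instance has total size polynomial in $k+r$. Decide it by invoking the bounded search tree algorithm of Section~\ref{subsec:BST_WERE}, whose running time is bounded by a computable function of $k+r$; combined with the polynomial-time kernelization this yields an \FPT{} algorithm for parameter $(k,r)$. I do not expect any significant obstacle: the non-trivial content is the kernel size bound itself (already established in Lemma~\ref{lem:were_kernel}) and the soundness of the reduction rules (already established in the preceding claims), so the lemma is essentially a bookkeeping corollary.
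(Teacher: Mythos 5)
Your proposal is correct and matches the paper's intent exactly: the paper states this lemma as an immediate consequence of the kernelization (it offers no separate proof beyond ``Subsequently we have the following''), and your write-up simply supplies the standard details --- polynomial-time applicability and termination of Reduction Rules~\ref{redrule:large_degree}, \ref{redrule:disjoint_clean} and~\ref{redrule:edge_regular_contraction}, followed by the size bound of Lemma~\ref{lem:were_kernel} and a brute-force or bounded-search-tree decision on the kernel. No discrepancy to report.
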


As we only have deletion operations, we can represent an unweighted graph as a weighted graph where all elements have weight $1$.

\begin{corollary}\label{cor:WERE_var_FPT_by_kern}
$\WERE^{r,\lambda}(S)$, $\WERE^{*}_{\plain}(S)$ and $\WERE^{r,\lambda}_{\plain}(S)$ are fixed parameter-tractable with parameter $(k,r)$ where $\{\vdel{}\} \subseteq S \subseteq \{\vdel{}, \edel{}\}$.
\end{corollary}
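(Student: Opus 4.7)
The plan is to observe that each of the three problems listed in the corollary is either a syntactic restriction or a straightforward instantiation of $\WERE^{*}(S)$, so that Lemma~\ref{lem:WERE_FPT_by_kern} transfers with at most a trivial reencoding of the instance. The crux is that the only editing operations allowed by $S$ are vertex and edge deletion, which cost exactly the weight of the deleted element; this keeps the meaning of $k$ identical whether the graph is weighted with unit weights or genuinely unweighted.

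First I would handle $\WERE^{r,\lambda}(S)$. This is literally the special case of $\WERE^{*}(S)$ obtained by setting $\delta(v) = \{r\}$ for every $v \in V(G)$ and $\nu(u,v) = \{\lambda\}$ for every pair $u,v$; in particular these are singleton assignments, so every instance of $\WERE^{r,\lambda}(S)$ is already an instance of $\WERE^{*}(S)$. The kernelization from Lemma~\ref{lem:WERE_FPT_by_kern} and its $O(kr(k+r))$ bound on $\Card{V(G)}$ therefore apply verbatim.

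Next I would address $\WERE^{*}_{\plain}(S)$ by reencoding. Given an unweighted instance $(G,(k,r))$ with singleton degree and neighbourhood functions, I construct the weighted instance $(G,(k,r),\rho)$ where $\rho(x) = 1$ for every $x \in V(G) \cup E(G)$. Since $S \subseteq \{\vdel{},\edel{}\}$ contains no edge-addition operation, the cost of any edit is the weight of the deleted element, which is $1$, matching the unit cost in the unweighted setting. Consequently a set of at most $k$ edits solves the unweighted instance if and only if it solves the weighted one, and Lemma~\ref{lem:WERE_FPT_by_kern} again gives the desired kernel. Finally, $\WERE^{r,\lambda}_{\plain}(S)$ simultaneously satisfies both restrictions, so combining the two observations above yields its membership in \FPT{}.

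There is no real obstacle here beyond checking that the degree and neighbourhood functions in each variant are syntactic specialisations of those allowed in $\WERE^{*}(S)$, and that the weighting trick is legitimate precisely because $\eadd{} \notin S$ (so we are never forced to assign a weight to a newly created edge). Since Reduction Rules~\ref{redrule:large_degree}, \ref{redrule:disjoint_clean} and~\ref{redrule:edge_regular_contraction} only ever shrink or delete elements and never introduce weights larger than~$k+1$, applying them to a unit-weighted input is entirely consistent, and the kernel bound of Lemma~\ref{lem:were_kernel} is inherited without modification.
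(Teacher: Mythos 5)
Your proposal is correct and matches the paper's own (very brief) justification: the paper derives the corollary from Lemma~\ref{lem:WERE_FPT_by_kern} by noting that the $r,\lambda$ variants are singleton specialisations and that, since only deletion operations are available, an unweighted instance can be represented as a weighted one with all weights equal to~$1$. Your additional remarks about the reduction rules respecting unit weights are a harmless elaboration of the same argument.
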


\section{Strongly Regular Graphs}\label{sec:strongly_regular}

\subsection{A Kernelization for WSRE$^{*}(\vdel{}, \edel{}$)}\label{subsec:kernel_WSRE}

\subsubsection{Reduction Rules}

Again Reduction Rule~\ref{redrule:large_degree} (\emph{q.v.} Section~\ref{subsubsec:edge_degree_reduction_rules}) holds with no change, and once the appropriate notion of clean region has been defined, Reduction Rule~\ref{redrule:disjoint_clean} also holds.

In this case the definition of a clean region is the obvious extension of the clean region for $\WERE{}$, where for each vertex $v$ in the clean $C$ region and every vertex $u$ where $uv \notin E$ we have $\Card{N(u) \cap N(v)} = \xi(u,v)$.

\begin{redrule}\label{redrule:strongly_regular_contraction}
Let $(G,(k,r))$ be an instance of $\WSRE^{*}(S)$. Let $C$ be a clean region with boundary $B(C)$. We replace $(G,(k,r))$ with $(G',(k,r))$ by shrinking $C$ by removing all but $C_{1}$ and $C_{2}$ as follows:
\begin{enumerate}
\item Delete $C\setminus(C_{1}\cup C_{2})$.
\item Add all edges $uv$ where $u,v \in C_{1}\cup C_{2}$.
\item For every vertex $v \in C_{2}$ set $\delta(v)$ to $d^{\rho}(v)$.
\item For every vertex $v \in C_{1} \cup C_{2}$ and every vertex $u \in V(G)$, if $uv \in E$ set $\nu(u,v)$ to $\Card{N(u) \cap N(v)}$, if $uv \notin E$ set $\xi(u,v)$ to $\Card{N(u) \cap N(v)}$.
\item Arbitrarily choose a vertex $v \in C_{1} \cup C_{2}$ and set $\rho(v)$ to $\min(k+1, \sum_{u\in C\setminus(C_{1}\cup C_{2})}\rho(u))$.
\end{enumerate}
\end{redrule}

\begin{claim}\label{claim:RR9_for_WSRE}
Reduction Rule~\ref{redrule:strongly_regular_contraction} is sound for $\WSRE^{*}(S)$ with $\{\vdel{}\} \subseteq S \subseteq \{\vdel{}, \edel{}\}$.
\end{claim}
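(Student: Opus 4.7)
The proof plan follows the template established by Claims~\ref{claim:RR7_for_WEDCE} and~\ref{claim:RR8_for_WERE} for the analogous contraction rules in the simpler problems. Fix an instance $(G,(k,r))$ of $\WSRE^{*}(S)$, a clean region $C$ with boundary $B(C)$, and let $(G',(k,r))$ denote the reduced instance produced by Reduction Rule~\ref{redrule:strongly_regular_contraction}.

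First I would establish the \emph{cascading property}: in any minimal solution $D$ for $(G,(k,r))$, if $D$ intersects $V(C) \cup E(C)$ then $D$ must dispose of the entire clean region. The argument parallels the previous proofs --- disturbing any element of $C$ breaks the $\delta$ or $\nu$ constraint at some neighbouring vertex and hence forces further deletions, which propagate through the connected clean region. The new ingredient for $\WSRE$ is the $\xi$ constraint on non-edges: modifying $N(v)$ for some $v \in C$ shifts the common-neighbour count with \emph{every} non-neighbour of $v$ as well, but this only strengthens the cascade rather than breaking it, since any repair must itself involve elements of $C$.

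Next I would justify retaining $C_{1} \cup C_{2}$ rather than $C_{1}$ alone (as was enough for $\WERE$). For a boundary vertex $b \in B(C)$ and neighbour $c \in C_{1}$, the common neighbours may lie in $B(C) \cup C_{1} \cup C_{2}$, so $C_{2}$ must be retained to preserve $\nu(b,c)$. For a non-adjacent pair $(b,v)$ with $b \in B(C)$ and $v \in C_{2}$, the common neighbours lie in $B(C) \cup C_{1}$, which is preserved. For any $v \in C_{i}$ with $i \geq 3$ and any $u \notin V(C)$, the sets $N(u)$ and $N(v)$ are disjoint in $G$, so $\xi(u,v)=0$ was vacuously satisfied and the constraint simply disappears in $G'$ once $v$ is deleted. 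Step~4 of the rule recomputes $\nu$ and $\xi$ on every pair incident to $C_{1}\cup C_{2}$, absorbing the fresh clique edges added in Step~2, so the reduced region is clean by construction and no external constraint is disturbed (for a pair neither of whose endpoints lies in $C_{1}\cup C_{2}$, the added edges do not sit in any common neighbourhood, so $|N(u) \cap N(v)|$ is unaffected).

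Finally I would handle the weight accounting. Concentrating the weight $\min(k+1,\sum_{u\in C\setminus(C_{1}\cup C_{2})}\rho(u))$ on a single chosen vertex $v \in C_{1} \cup C_{2}$, combined with the cascade, guarantees that deleting the reduced clean region in $G'$ costs at least as much as deleting the original clean region in $G$ whenever the cap is inactive, and exceeds the budget $k$ whenever the cap fires (blocking deletion on both sides simultaneously). This yields the two-way correspondence between solutions: a solution for $(G,(k,r))$ either avoids $V(C) \cup E(C)$ entirely and transfers verbatim, or deletes all of $C$ and corresponds to deletion of $C_{1} \cup C_{2}$ in $G'$ of no greater cost; the converse is symmetric. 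The hard part, I expect, will be the $\xi$ bookkeeping: unlike in $\WDCE$ and $\WERE$, non-edge constraints in $\WSRE$ couple every pair of vertices, and I must verify carefully both that the vertices removed from layers $C_{i}$ with $i \geq 3$ contribute to no $|N(u)\cap N(v)|$ that is still relevant in $G'$, and that the newly formed clique on $C_{1}\cup C_{2}$ does not silently violate a $\nu$ or $\xi$ constraint for any pair whose neighbourhoods happened to sit inside $C_{1}\cup C_{2}$.
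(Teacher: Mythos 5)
Your proposal is correct and follows essentially the same route as the paper's (much terser) proof: the central observation in both is that $\xi(u,v)>0$ forces $d(u,v)\leq 2$, so layers $C_{i}$ with $i\geq 3$ carry no non-edge constraints with vertices outside the clean region and may be removed, combined with the cascade/all-or-nothing deletion property and the weight cap at $k+1$. Your write-up is in fact more detailed than the paper's on the $\nu$/$\xi$ bookkeeping for the new clique edges and on the weight accounting, but the underlying argument is the same.
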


\begin{proof}
The new clean region remains a clean region, and the total weight, if less than $k+1$ is the same. Any total weight greater than $k$ is equivalent, as we cannot delete the clean region in any of those cases.

As $C$ is a clean region, if there is a vertex $v$ with $xi(u,v) > 0$ for some vertex $u \in C$, then $d(u,v) \leq 2$. Therefore we do not affect vertices outside of the clean region by removing all layers $C_{i}$ for $i \geq 3$.
\end{proof}

\subsubsection{Kernelization Lemma}

\begin{lemma}\label{lem:WSRE_kern}
Let  $\{\vdel{}\} \subseteq S \subseteq \{\vdel{}, \edel{}\}$. Let $(G,(k,r))$ be a \Yes{}-instance of $\WSRE^{*}(S)$ reduced under Reduction Rules~\ref{redrule:large_degree},~\ref{redrule:disjoint_clean} and~\ref{redrule:strongly_regular_contraction}. Then $\Card{V(G)} \leq k + k(k+r) + kr(r+1)(k+r) = O(kr^{2}(k+r))$. 
\end{lemma}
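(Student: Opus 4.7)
The plan is to mirror the partition-based argument from Lemma~\ref{lem:were_kernel}, but account for the fact that Reduction Rule~\ref{redrule:strongly_regular_contraction} keeps \emph{two} layers of each clean region (namely $C_{1}$ and $C_{2}$) rather than one. First I would fix a minimal solution and partition $V(G)$ into three disjoint sets: $D$, the vertices deleted (or incident to edges deleted) by the solution; $H$, the vertices not in $D$ that are adjacent to some element of $D$; and $X = V(G) \setminus (D \cup H)$, the vertices that lie in clean regions and are untouched by editing. As before, $H$ separates $D$ from $X$ in $G$.

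The bounds on $|D|$ and $|H|$ follow exactly as in the $\WERE^{*}$ case. By definition $|D| \leq k$, and since the instance is reduced under Reduction Rule~\ref{redrule:large_degree} every vertex has weighted degree at most $k+r$, and hence at most $k+r$ neighbours; therefore $|H| \leq |D|(k+r) \leq k(k+r)$.

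The new ingredient is the bound on $|X|$. Since the instance is reduced under Reduction Rule~\ref{redrule:disjoint_clean}, every vertex of $X$ belongs to a clean region $C$ with non-empty boundary $B(C) \subseteq H$. After Reduction Rule~\ref{redrule:strongly_regular_contraction}, each such clean region consists only of the layers $C_{1}$ and $C_{2}$. Every vertex in $C_{1}$ is adjacent to some vertex of $B(C) \subseteq H$, and every vertex of $H$ has degree at most $k+r$ but only at most $r$ neighbours in the clean region (since degrees in $C$ are bounded by the degree constraint, whose lists lie in $[r]$); thus the total number of $C_{1}$ vertices is at most $r \cdot |H|$. Similarly, every vertex in $C_{2}$ is adjacent to some vertex in $C_{1}$, and each $C_{1}$ vertex has at most $r$ neighbours, contributing at most $r \cdot r |H|$ vertices in $C_{2}$. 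Therefore $|X| \leq r|H| + r^{2}|H| = r(r+1)|H|$. I would wrap this count into a Claim analogous to Claim~\ref{claim:were_X_size}.

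Combining the three bounds gives $|V(G)| = |D| + |H| + |X| \leq k + k(k+r) + kr(r+1)(k+r)$, which is $O(kr^{2}(k+r))$ as required. The main obstacle is the $|X|$ count: one must be careful that the layer structure really forces $X$ to sit within distance $2$ of $H$ (guaranteed by the post-Rule~\ref{redrule:strongly_regular_contraction} shape of the clean region) and that, although the degree bound from Reduction Rule~\ref{redrule:large_degree} only yields $k+r$, the clean-region constraint forces the tighter bound $r$ on neighbours inside $C$, which is exactly what makes the $r(r+1)$ factor, rather than $(k+r)^{2}$, appear in the final estimate.
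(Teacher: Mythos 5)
Your proposal is correct and follows essentially the same route as the paper: the same $D$, $H$, $X$ partition with $\Card{D}\leq k$ and $\Card{H}\leq\Card{D}(k+r)$, and the same two-layer count $\Card{X}\leq (r+r^{2})\Card{H}$ arising from Reduction Rule~\ref{redrule:strongly_regular_contraction} retaining $C_{1}$ and $C_{2}$. Your added remark that the bound of $r$ (rather than $k+r$) on neighbours inside clean regions is what yields the $r(r+1)$ factor matches the paper's appeal to the edited graph being regular.
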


\begin{proof}
We partition $G$ into disjoint sets $D$, $H$ and $X$ where $D$ is the set of vertices and edges deleted to obtain the edge regular graph, $H$ is the set of vertices incident or adjacent to elements of $D$ that are not themselves in $D$ and $X$ is the set of remaining vertices in neither $D$ nor $H$. $H$ separates $D$ from $X$. By definition $\Card{D} \leq k$.

Further, $\Card{H} \leq \Card{D}\cdot(k+r)$.

\begin{claim}\label{claim:wsre_X_size}
$\Card{X} \leq r(r+1)\cdot\Card{H}$.
\end{claim}

As removing $D$ leaves the graph edge regular, the vertices of $H$ can have at most $r$ neighbours outside of $D$, in particular they can have at most $r$ neighbours in $X$. Furthermore all vertices in $X$ must belong to some clean region in the original graph, therefore by Reduction Rules~$2$ and $9$, all the vertices in $X$ must be at most distance~$2$ from some vertex in $H$. Therefore the first layer $C_{1}$ of a clean region $C$ is of size at most $r\cdot\Card{H}$, and the second is of size at most $r^{2}\cdot\Card{H}$. The claim then follows.

We have $\Card{V(G)} \leq \Card{D} + \Card{H} + \Card{X} \leq k + k(k+r) + kr(r+1)(k+r)$.
\end{proof}

This gives the following:

\begin{lemma}\label{lem:WSRE_FPT_by_kern}
$\WSRE^{*}(S)$ is fixed-parameter tractable with parameter $(k,r)$ where $\{\vdel{}\} \subseteq S \{\vdel{}, \edel{}\}$.
\end{lemma}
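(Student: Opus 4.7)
The plan is to derive this FPT result directly from the kernelization machinery established just above. By Lemma~\ref{lem:WSRE_kern}, any instance $(G,(k,r))$ of $\WSRE^{*}(S)$ can be exhaustively reduced under Reduction Rules~\ref{redrule:large_degree},~\ref{redrule:disjoint_clean} and~\ref{redrule:strongly_regular_contraction} to an equivalent instance whose vertex count is bounded by $k+k(k+r)+kr(r+1)(k+r) = O(kr^{2}(k+r))$, i.e.\ by a polynomial in the combined parameter $(k,r)$. Since the number of edges is at most quadratic in the number of vertices, and the weights can be capped at $k+1$ by the contraction in Reduction Rule~\ref{redrule:strongly_regular_contraction} (and similar bounding considerations for weights inherited from the $\WDCE$ treatment), the total bit size of the reduced instance is bounded by a function of $k+r$ alone. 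This is exactly a problem kernel with respect to parameter $(k,r)$.

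Once the kernel is in hand, fixed-parameter tractability follows by the standard ``kernelization $+$ brute force'' recipe: after reduction, solve the bounded-size instance by any decision procedure (for instance, by trying all subsets of vertices and edges of the kernel to delete, of which there are at most $2^{O(kr^{2}(k+r))}$, and checking each candidate against the degree lists $\delta$ and the neighbourhood lists $\nu$, $\xi$ in polynomial time). The overall running time is therefore bounded by a polynomial in the input size (for the reduction phase) plus a function of $k+r$ (for the brute-force phase), which is the definition of \FPT{} with parameter $(k,r)$.

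The only thing one needs to verify carefully is that the reduction rules are applicable in the full range $\{\vdel{}\}\subseteq S \subseteq \{\vdel{},\edel{}\}$; this is already covered by Claim~\ref{claim:RR1_for_WEDCE} (which transfers verbatim since it only uses the edge-incidence / degree-lower-bound argument and thus holds for $\WSRE^{*}$ too), by the unchanged applicability of Reduction Rule~\ref{redrule:disjoint_clean} with the strongly-regular notion of clean region, and by Claim~\ref{claim:RR9_for_WSRE}. Since all three reduction rules run in polynomial time, the kernelization itself is polynomial, completing the argument.

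The step I would expect to require the most care is not the FPT conclusion itself (which is immediate from a polynomial kernel) but rather ensuring that the weight caps introduced by the contraction rule are genuinely sufficient so that the total bit size---and not merely the vertex count---is bounded by a function of $k+r$; this is however already implicit in the design of Reduction Rule~\ref{redrule:strongly_regular_contraction} where weights are explicitly capped at $k+1$. With this observation the lemma follows as a direct corollary of Lemma~\ref{lem:WSRE_kern}.
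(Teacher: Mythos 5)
Your proposal is correct and takes essentially the same route as the paper, which presents this lemma as an immediate consequence of the kernelization bound in Lemma~\ref{lem:WSRE_kern} (introduced only by the phrase ``This gives the following''). Your additional care about weight caps and the brute-force phase on the kernel is a more explicit spelling-out of the same standard ``polynomial kernel implies \FPT{}'' argument, not a different approach.
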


As before the employment of only deletion operations allows the representation of an unweighted graph as a weighted graph.

\begin{corollary}\label{cor:WSRE_var_FPT_by_kern}
$\WSRE^{r,\lambda,\mu}(S)$, $\WSRE^{*}_{\plain}(S)$ and $\WSRE^{r,\lambda,\mu}_{\plain}(S)$ are fixed parameter-tractable with parameter $(k,r)$ where $\{\vdel{}\} \subseteq S \subseteq \{\vdel{}, \edel{}\}$.
\end{corollary}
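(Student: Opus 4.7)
The plan is to derive the corollary directly from Lemma~\ref{lem:WSRE_FPT_by_kern} by two simple reductions: restricting degree/neighbourhood lists to particular singletons, and simulating unweighted instances as weighted ones with unit weights. Because the editing operations in question contain only $\vdel{}$ and possibly $\edel{}$, no new edges or vertices are introduced, so unit weights are preserved throughout the editing process and nothing forces us to ever reason about a weighted instance that did not start out with uniform weights.

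First I would observe that $\WSRE^{r,\lambda,\mu}(S)$ is the syntactic restriction of $\WSRE^{*}(S)$ to inputs where $\delta(v)=\{r\}$ for every vertex, $\nu(u,v)=\{\lambda\}$ for every edge, and $\xi(u,v)=\{\mu\}$ for every non-edge. Any instance of $\WSRE^{r,\lambda,\mu}(S)$ is therefore an instance of $\WSRE^{*}(S)$ with the same parameters $(k,r)$, so the \FPT{} algorithm guaranteed by Lemma~\ref{lem:WSRE_FPT_by_kern} immediately applies. Next, for $\WSRE^{*}_{\plain}(S)$ (respectively $\WSRE^{r,\lambda,\mu}_{\plain}(S)$), given an unweighted instance $(G,(k,r))$ I would define $\rho(x)=1$ for every $x\in V(G)\cup E(G)$ and feed the resulting weighted instance to the algorithm for $\WSRE^{*}(S)$ (respectively $\WSRE^{r,\lambda,\mu}(S)$). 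Since the editing cost in the unweighted problem is exactly one per operation, the two instances have identical cost functions; and since $S\subseteq\{\vdel{},\edel{}\}$ only removes elements, the rule ``every element has weight $1$'' never has to be reinstated on new elements. Hence the two instances are equivalent as \Yes{}/\No{}-instances, and the running time remains a function of $(k,r)$ times a polynomial in $|G|$.

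The only thing to check carefully is that the reduction rules used by Lemma~\ref{lem:WSRE_FPT_by_kern}, specifically Reduction Rule~\ref{redrule:strongly_regular_contraction}, do not require altered weights that break the $\WSRE^{r,\lambda,\mu}$ restriction when we apply the algorithm to an instance whose lists are all singletons $\{r\}$, $\{\lambda\}$, $\{\mu\}$. Inspection of that rule shows that it only \emph{assigns} $\delta(v)$, $\nu(u,v)$ and $\xi(u,v)$ to concrete values $d^{\rho}(v)$ or $\Card{N(u)\cap N(v)}$, which are already singletons; so the restricted problem is closed under reduction, and the kernelization yields a valid kernel of the same type. The same applies to the unweighted restriction, where the weight adjustment in step~5 of Reduction Rule~\ref{redrule:strongly_regular_contraction} can be carried out in the weighted encoding without affecting the \Yes{}/\No{} status of the original unweighted instance, since weights are used purely as a cost accounting mechanism.

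The ``hard part'' here is essentially trivial: there is no new algorithmic content beyond Lemma~\ref{lem:WSRE_FPT_by_kern}. The only subtlety, and the one worth spelling out, is this closure-under-reduction observation, ensuring that the kernelization for $\WSRE^{*}(S)$ actually produces instances of the restricted variant when started on one, and that unweighted instances remain faithfully simulated by unit-weight weighted instances because no editing operation in $S$ can introduce an element whose weight must be specified.
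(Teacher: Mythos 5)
Your proof is correct and takes essentially the same route as the paper: the corollary is obtained from Lemma~\ref{lem:WSRE_FPT_by_kern} by observing that the singleton-restricted variants are special cases of $\WSRE^{*}(S)$ and that, since only deletion operations are available, an unweighted instance is faithfully represented by assigning unit weights. Your additional check that the reduction rules preserve the restricted form is a reasonable elaboration of what the paper leaves implicit, but it does not change the argument.
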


\subsection{Fixed-Parameter Tractability for WSRE and WERE with Edge Addition}\label{subsec:WSRE_logic}

If we consider the case of $\WSRE(S)$ where $\{\vdel{},\eadd{}\} \subseteq S$, and for every vertex $v$ we have $\nu(v) = \xi(v) = \{0, \ldots, r\}$, then the problem is reduced to $\WDCE(S)$. By the $\NP{}$-completeness and compositionality of \textsc{Edge Replacement Set} (\emph{q.v.}~\cite{Mathieson10,MathiesonSzeider12}) we would expect that in general $\WSRE^{*}(S)$ and subsequently $\WSRE(S)$ would have no polynomial time kernelization that relied upon choosing vertices from clean regions to delete to obtain edge addition points, and that $\WSRE^{*}(S)$ and subsequently $\WSRE(S)$ would have no polynomial sized kernel unless the Polynomial Hierarchy collapses to the third level, by Lemma~\ref{lemma:comp_no_poly_kernel}. The same arguments apply to $\WERE$.

However we can still apply the logic approach used in~\cite{MathiesonSzeider12} to demonstrate fixed-parameter tractability for $\WSRE(S)$ and $\WERE(S)$ with $\emptyset \neq S \subseteq \{\vdel{}, \edel{}, \eadd{}\}$. In fact as we still require the graphs to satisfy the basic degree constraint $\delta$, we can extend the existing formul\ae{} from~\cite{MathiesonSzeider12}. We represent a graph by its incidence structure, with the additional relations $W_{i}$ and $D_{i}$ as in~\cite{MathiesonSzeider12}. To the vocubulary we add a further two sets of binary relations $N_{i}$ with $0 \leq i \leq r$ and $M_{j}$ with $0 \leq j \leq r$ which encode the functions $\nu$ and $\xi$ respectively, i.e., for a pair of vertices $u,v \in V(G)$ $N_{i}uv$ is true if and only if $i \in \nu(uv)$, similarly
$M_{i}uv$ is true if and only if $i \in \xi(uv)$. By setting $\xi(u,v) = \{0, \ldots, r\}$ for all vertices $u,v$ we can represent $\WERE$ with no change to the logic (in fact we may represent another otherwise undefined problem where $\nu(v) = \{0, \ldots, r\}$ but $\xi(u,v)$ does not).

To each formula $\phi_{k}$ as defined in~\cite{MathiesonSzeider12}, we add two subformul\ae{} $\forall uv\psi'_{k}$ and $\forall uv\psi''_{k}$ by conjunction.

First we repeat the construction of $\phi_{k}$, presented as in the previous work~\cite{MathiesonSzeider12}:


\[
  \phi_k=\bigvee_{\begin{array}{l}k',k'',k''' \in [k] \text{ such that }\\k'+k''+k''' \leq k\end{array}}\left[\begin{aligned}
\exists u_{1}, \ldots, u_{k'}, & e_{1}, \ldots, e_{k''},\\
\exists a_{1},\ldots, a_{k'''},&  b_{1},\ldots, b_{k'''}\\
&  (\phi_k' \wedge \forall v \ \phi_k'')]
\end{aligned}\right]\]

where $\phi_k'$ and $\phi_k''$ are given below. The subformula
$\phi_k'$ is the conjunction of the clauses (1)--(3) and ensures that
$u_{1}, \ldots, u_{k'}$ represent deleted vertices, $e_{1}, \ldots
e_{k''}$ represent deleted edges, $a_{i},b_i$, $1\leq i \leq k'''$
represent end points of added edges, and the total editing cost is at
most~$k$. Note that since added edges are not present in the given
structure we need to express them in terms vertex pairs.  For the
unweighted case we must also include subformul\ae{} (4) and~(5) to ensure
that the addition of edges does not produce parallel edges.  By
restricting $k'$, $k''$ or $k'''$ to zero as appropriate we can
express which editing operations are available.

\begin{enumerate}
\item[(1)] $\bigwedge_{i\in [k']} Vu_i \wedge \bigwedge_{i\in [k'']}
  Ee_i$
\ ``$u_i$ is a vertex, $e_i$ is an edge;''
\item[(2)] $\bigwedge_{i\in [k''']} Va_i \wedge Vb_i \wedge  a_i\neq
  b_i  \wedge \bigwedge_{j \in [k']} (u_j\neq a_i \wedge u_j\neq b_i)$
\  ``$a_i$ and $b_i$ are distinct vertices and  not deleted;''
\item[(3)]$\bigvee_{w_1,\dots,w_{k'}\in [k'] \text{ such that } 
 \sum_{i \in [k']}w_{i} + k''+k''' \leq  k } \bigwedge_{i \in [k']}
W_{w_{i}}u_{i}$ 
\ ``the weight of deleted vertices is correct;''
\item[(4)] $\bigwedge_{1\leq i < j \leq k''}
(a_i\neq b_j \vee a_j \neq b_i)
\wedge
(a_i\neq a_j \vee b_i \neq b_j)$
\ ``the pairs of vertices are mutually distinct;''
\item[(5)] $\bigwedge_{i\in [k''']} \forall y (\neg Ia_iy \vee \neg
  Ib_iy)$ 
\ ``$a_i$ and $b_i$ are not adjacent.''
\end{enumerate}

The subformula $\phi_k''$ ensures that after editing each vertex $v$
has degree $l \in \delta(v)$.
\[\phi_k''=
( Vv  \wedge \bigwedge_{i \in [k']} v
\neq u_{i}) \rightarrow
\bigvee_{l \in [r]}D_{l}v \wedge \bigvee_{\scriptsize
  \begin{array}{c}
    l',l''\in[l]\\
    l'+l''=l
  \end{array}
} \exists x_1,\dots,x_{l'},
 y_1,\dots,y_{l''}
\ \phi_k''',
\]

where $\phi_k'''$ is the conjunction of the clauses (6)--(12).
\begin{enumerate}
\item[(6)] $\bigwedge_{i \in [l']} Ivx_{i}$
\ ``$v$ is incident with $l'$ edges;''
\item[(7)] $\bigwedge_{1\leq i<j \leq l'} x_{i} \neq x_{j}$
\ ``the edges are all different;''
\item[(8)] $\bigwedge_{i \in [l'], j \in [k'']} x_{i} \neq e_{j}$ 
\ ``the  edges have not been deleted;''
\item[(9)] $\bigwedge_{i\in[l'],j\in[k']} \neg Iu_jx_{i}$
\ ``the ends of the edges have not been  deleted;''
\item[(10)] $\forall x (Ivx \rightarrow \bigvee_{i \in [l']} x = x_{i}
  \vee \bigvee_{i \in [k'']} x = e_{i} \vee \bigvee_{i} Ixu_i)$
\  ``$v$ is not incident with any further edges except deleted edges;''
\item[(11)] $\bigwedge_{i \in [l'']} \bigvee_{j \in [k''']}
  (y_i=a_j\wedge v=b_j) \vee (y_i=b_j \wedge v=a_j)$ 
 \ ``$v$ is incident with at least $l''$ added edges;''
\item[(12)] $\bigwedge_{j\in [l'']} (v=a_j \rightarrow
  \bigvee_{i} y_i=b_j) \wedge (v=b_j \rightarrow \bigvee_{j\in[l'']}
  y_i=a_j)$ 
\ ``$v$ is incident with at most $l''$ added edges.''
\end{enumerate}

To this we include further subformul\ae{} to accomodate the edge constraints.
The subformula $\psi'_{k}$ ensures that if two vertices $u$ and $v$ are adjacent, then $\nu(u,v)$ is satisfied. 

\begin{align*}&\psi'_{k} = (Vu \wedge Vv \wedge \bigwedge_{i\in [k']} (u_{i} \neq u \wedge u_{i} \neq v)\\
  &\wedge \exists e (Iue \wedge Ive \wedge \bigwedge_{i \in [k'']} (e_{i} \neq e))) \rightarrow\\ 
  &\bigvee_{l \in [r]} N_{l}uv \wedge\\
  &\bigvee_{\scriptsize\begin{array}{c}l',m',l'',m'' \in [l]\\l'+m' =l''+m'' = l\end{array}} \exists n_{1}\ldots n_{l}x_{1}\ldots x_{l'}y_{1}\ldots y_{m'}w_{1}\ldots w_{l''}z_{1}\ldots z_{m''} \psi'''_{k}
\end{align*}

Where $\psi'''_{k}$ is the conjuction of clauses (13)--(30):

\begin{enumerate}
\item[(13)] $\bigwedge_{i \neq j \in [l]} Vn_{i} \wedge n_{i} \neq n_{j} \wedge u \neq n_{i} \wedge v \neq n_{j}$ \ ``the $n_{i}$s are distinct vertices, different from $u$ and $v$;''
\item[(14)] $\bigwedge_{i \in [l],j\in [k']} n_{i} \neq u_{j}$ \ ``the $n_{i}$s have not been deleted;''
\item[(15)] $\bigwedge_{i \neq j \in [l']} Ex_{i} \wedge x_{i} \neq x_{j}$ \ ``the $x_{i}$s are distinct edges;''
\item[(16)] $\bigwedge_{i \in [l'], j \in [k'']} x_{i} \neq e_{j}$ \ ``the $x_{i}$s have not been deleted;''
\item[(17)] $\bigwedge_{i \neq j \in [m']} Vy_{i} \wedge y_{i} \neq y_{j}$ \ ``the $y_{i}$s are distinct vertices;''
\item[(18)] $\bigwedge_{i \in [m'], j \in [k']} y_{i} \neq u_{j}$ \ ``the $y_{i}$s have not been deleted;''
\item[(19)] $\bigwedge_{i \in [m']} \bigvee_{j \in [l]} y_{i} = n_{j}$ \ ``the $y_{i}$s are equal to some $n_{j}$;'' 
\item[(20)] $\bigwedge_{i \in [l']} (Iux_{i} \wedge \bigvee_{j \in [l]} In_{j}x_{i}\wedge \bigwedge_{t \in [m']} n_{j} \neq y_{t})$ \ ``the $x_{i}$s are adjacent to $u$ and some $n_{j}$ which is not equal to any $y_{t}$;''
\item[(21)] $\bigwedge_{i \in [m']}\bigvee_{j \in [k''']} (u = a_{j} \wedge y_{i} = b_{j}) \vee (u = b_{j} \wedge y_{i} = a_{j})$ \ ``$y_{i}$ and $u$ are the endpoints of some added edge;''
\item[(22)] $\bigwedge_{i \in [k''']} a_{i} = u \vee b_{i} = u \rightarrow \bigvee_{j \in [m']} a_{i} = y_{j} \vee b_{i} = y_{j}$ \ ``$u$ is incident on no other added edges.''
\item[(23)] $\bigwedge_{i \neq j \in [l'']} Ew_{i} \wedge w_{i} \neq w_{j}$ \ ``the $w_{i}$s are distinct edges;''
\item[(24)] $\bigwedge_{i \in [l''], j \in [k'']} w_{i} \neq e_{j}$ \ ``the $w_{i}$s have not been deleted;''
\item[(25)] $\bigwedge_{i \neq j \in [m'']} Vz_{i} \wedge z_{i} \neq z_{j}$ \ ``the $z_{i}$s are distinct vertices;''
\item[(26)] $\bigwedge_{i \in [m''], j \in [k']} z_{i} \neq u_{j}$ \ ``the $z_{i}$s have not been deleted;''
\item[(27)] $\bigwedge_{i \in [m'']} \bigvee_{j \in [l]} z_{i} = n_{j}$ \ ``the $z_{i}$s are equal to some $n_{j}$;'' 
\item[(28)] $\bigwedge_{i \in [l'']} (Iuw_{i} \wedge \bigvee_{j \in [l]} In_{j}w_{i}\wedge \bigwedge_{t \in [m'']} n_{j} \neq z_{t})$ \ ``the $w_{i}$s are adjacent to $u$ and some $n_{j}$ which is not equal to any $z_{t}$;''
\item[(29)] $\bigwedge_{i \in [m'']}\bigvee_{j \in [k''']} (u = a_{j} \wedge z_{i} = b_{j}) \vee (u = b_{j} \wedge z_{i} = a_{j})$ \ ``$z_{i}$ and $u$ are the endpoints of some added edge;''
\item[(30)] $\bigwedge_{i \in [k''']} a_{i} = u \vee b_{i} = u \rightarrow \bigvee_{j \in [m'']} a_{i} = z_{j} \vee b_{i} = z_{j}$ \ ``$u$ is incident on no other added edges.''
\end{enumerate}

The subformula $\psi''_{k}$ ensures that for two nonadjacent vertices $u$ and $v$, $\xi(u,v)$ is satisfied. $\psi''_{k}$ is essentially identical to $\psi'_{k}$, and we can re-use the subformula $\psi'''_{k}$.

\begin{align*}
  &\psi''_{k} = (Vu \wedge Vv \wedge \bigwedge_{i\in [k']} (u_{i} \neq u \wedge u_{i} \neq v) \wedge\\
  &\forall e (\neg Iue \vee \neg Ive \vee \bigwedge_{i \in [k'']} (e_{i} \neq e))) \rightarrow\\
  &\bigvee_{l \in [r]} M_{l}uv \wedge\\
  &bigvee_{\scriptsize\begin{array}{c}l',m',l'',m'' \in [l]\\l'+m' =l''+m'' = l\end{array}} \exists n_{1}\ldots n_{l}x_{1}\ldots x_{l'}y_{1}\ldots y_{m'}w_{1}\ldots w_{l''}z_{1}\ldots z_{m''} \psi'''_{k}
\end{align*}

\begin{lemma}\label{lem:logic_formula_exist_for_WSRE}
Let $(G,(k,r))$ be an instance of $\WSRE(S)$ (resp. $\WERE(S)$) where $\emptyset \neq S \subseteq \{\vdel{}, \edel{}, \eadd{}\}$ with associated incidence structure $S_{G}$. There exist first order formul\ae{} $\phi_{k}$, for $k \geq 0$, such that $S_{G}$ is a model for $\phi_{k}$ if and only if $(G,(k,r))$ is a \Yes{}-instance of $\WSRE(S)$ (resp. $\WERE(S)$).
\end{lemma}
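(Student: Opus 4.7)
The plan is to treat the extended formula $\phi_k$---i.e.\ the $\WDCE$ formula of Mathieson and Szeider~\cite{MathiesonSzeider12} conjoined with $\forall uv\,\psi'_k$ and $\forall uv\,\psi''_k$---as a conservative extension of the $\WDCE$ encoding, so that the overall skeleton (edit witnesses, cost accounting, and the $\delta$ constraint) is inherited wholesale. It then suffices to verify that $\psi'_k$ faithfully encodes the constraint $\nu$ on pairs adjacent in the edited graph $G'$ and that $\psi''_k$ does the same for $\xi$ on non-adjacent pairs; the $\WERE$ case is then obtained by setting $\xi(u,v)=\{0,\ldots,r\}$ for every pair, which renders $\psi''_k$ trivially true.

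For the forward direction, assume $(G,(k,r))$ is a \Yes{}-instance with a witnessing edit set consisting of deleted vertices $U$, deleted edges $E_d$, and added edges $E_a$, yielding $G'$. Instantiate the outer existentials of $\phi_k$ in the natural way, following~\cite{MathiesonSzeider12}. Fix any pair $u,v$ surviving in $G'$ and adjacent there; let $l=\Card{N_{G'}(u)\cap N_{G'}(v)}$, which lies in $\nu(u,v)$ by hypothesis, so $N_l uv$ holds. Split $l=l'+m'=l''+m''$, where $l'$ (resp.~$m'$) counts common neighbours reached from $u$ via a surviving original edge (resp.~an added edge), and similarly $l'',m''$ from $v$. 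Taking $n_1,\ldots,n_l$ to enumerate the common neighbours and $x_i,y_i,w_i,z_i$ the corresponding edges and added-edge partners, clauses~(13)--(30) of $\psi'''_k$ hold by construction. The check for $\psi''_k$ on non-adjacent pairs is symmetric.

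Conversely, if $S_G\models\phi_k$, the $\phi'_k$ and $\phi''_k$ witnesses specify a valid edit set of cost at most $k$ and enforce the degree constraint on $G'$ as in~\cite{MathiesonSzeider12}. For each surviving adjacent pair $u,v$, the antecedent of $\psi'_k$ fires, some $N_l uv$ must hold (so $l\in\nu(u,v)$), and the inner existentials supply $l$ distinct common neighbours certified by the bookkeeping (13)--(30); the non-adjacent case runs identically through $\psi''_k$ with $M_l,\xi$ in place of $N_l,\nu$. The main obstacle will be the ``at most $l$'' side of the count: one must check that clauses~(22) and~(30), which close off any unaccounted-for added-edge incidences at $u$, together with the incidence-matching clauses~(20) and~(28) and the degree bound already enforced by $\phi''_k$, genuinely prevent an $(l+1)$-th common neighbour in $G'$. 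If the current $\psi'''_k$ turns out to supply only a lower bound, a further universal subformula in the spirit of clause~(10) of $\phi''_k$, asserting that every common neighbour of $u$ and $v$ equals some $n_j$, must be inserted; this is a local modification that does not affect the first-order nature of the encoding.
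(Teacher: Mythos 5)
The paper states this lemma with no proof at all---the construction of $\phi_k$, $\psi'_k$, $\psi''_k$ and $\psi'''_k$ together with the English glosses on clauses (1)--(30) is the entire argument---so your proposal, which walks through both directions of that same construction, is essentially the intended proof spelled out. One point deserves emphasis: the worry you raise at the end is not hypothetical but a genuine defect of the formula as printed. Clauses (22) and (30) close off only \emph{added} edges at $u$ (resp.\ $v$), and nothing in $\psi'''_k$ plays the role that clause (10) plays in $\phi''_k$; a vertex $p$ joined to both $u$ and $v$ by surviving \emph{original} edges and not listed among the $n_j$ is never excluded, so $\psi'''_k$ as written certifies only $\Card{N_{G'}(u)\cap N_{G'}(v)}\geq l$ for some $l\in\nu(u,v)$, not membership of the exact count in $\nu(u,v)$. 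The universal clause you propose (every common neighbour of $u$ and $v$ in the edited graph equals some $n_j$, i.e.\ is reached by some $x_i$, $y_i$, $w_i$ or $z_i$, or else is deleted) is therefore required, not optional; it is first-order and of size bounded in $r$ and $k$, so the lemma and Theorem~\ref{thm:WSRE_WERE_FPT_by_logic} survive intact. With that clause made unconditional rather than tentative, your argument is complete and, if anything, more careful than the paper's.
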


By Frick and Grohe's metatheorem~\cite{FG01} and Lemma~\ref{lem:logic_formula_exist_for_WSRE}:

\begin{theorem}\label{thm:WSRE_WERE_FPT_by_logic}
The problems $\WSRE(S)$ and $\WERE(S)$ are fixed-parameter tractable for parameter $(k,r)$ where $\emptyset \neq S \subseteq \{\vdel{}, \edel{}, \eadd{}\}$.
\end{theorem}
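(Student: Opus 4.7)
The proof will combine Lemma~\ref{lem:logic_formula_exist_for_WSRE} with a first-order model-checking metatheorem on graphs of bounded degree, mirroring the logical route already used for $\WDCE(S)$ in Theorem~\ref{thm:queen}. In outline, the first-order sentence $\phi_k \wedge \forall uv\,\psi'_k \wedge \forall uv\,\psi''_k$ supplied by the lemma has length bounded purely by a function of $k$ and $r$, so if we can evaluate it on the incidence structure $S_G$ in time $f(k,r)\cdot|G|^{O(1)}$, fixed-parameter tractability of $\WSRE(S)$ and $\WERE(S)$ for parameter $(k,r)$ follows immediately.

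First I would justify that we may assume $G$ has maximum degree at most $k+r$. The argument of Claim~\ref{claim:RR1_for_WEDCE} adapts verbatim: since edge addition can only \emph{increase} a vertex's degree, a vertex $v$ with $d^\rho(v) > k+r$ would force more than $k$ deletions before any edge incident to $v$ could become satisfied, unless $v$ itself is deleted. Hence Reduction Rule~\ref{redrule:large_degree} is sound for every $S \subseteq \{\vdel{},\edel{},\eadd{}\}$ with $\vdel{}\in S$; if $\vdel{}\notin S$ but some $v$ still has $d^\rho(v) > k+r$, we can answer \No{} directly. After exhaustive application we have an equivalent instance whose underlying graph has maximum degree bounded by $k+r$, a bound depending only on the parameter.

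Then I would invoke Frick and Grohe's metatheorem~\cite{FG01}: first-order model checking on any class of relational structures whose Gaifman graphs are of bounded degree (and hence of bounded local tree-width) is fixed-parameter tractable when the parameter is the length of the formula. The enriched incidence structure $S_G$ carrying the unary relations $V,E$, the incidence relation $I$, and the weight/degree/neighbourhood relations $W_i, D_i, N_i, M_j$ has Gaifman graph of degree bounded by a function of $k+r$ once the reduction above has been applied. Since the sentence produced by Lemma~\ref{lem:logic_formula_exist_for_WSRE} has size depending only on $k$ and $r$, the overall algorithm runs in time $g(k,r)\cdot|G|^{O(1)}$, giving the claimed fixed-parameter tractability.

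The main obstacle is less algorithmic than verificational: one must check that the $\eadd{}$ operation does not break the reduction to bounded degree, and that the encoding of $\nu$ and $\xi$ as binary relations $N_i, M_j$ over \emph{all} vertex pairs (not only edges) is consistent with switching pairs between adjacency classes, since the formul\ae{} $\psi'_k$ and $\psi''_k$ are written to handle both cases via their conditional premises. Both points are addressed inside Lemma~\ref{lem:logic_formula_exist_for_WSRE} and the adapted Reduction Rule~\ref{redrule:large_degree}, so once these are in place the theorem follows by a direct appeal to~\cite{FG01}.
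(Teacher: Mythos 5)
Your route is the paper's route: the paper's entire proof of Theorem~\ref{thm:WSRE_WERE_FPT_by_logic} is the citation of Lemma~\ref{lem:logic_formula_exist_for_WSRE} together with Frick and Grohe's metatheorem, and your preprocessing step --- applying Reduction Rule~\ref{redrule:large_degree} (or answering \No{} when $\vdel{}\notin S$) so that the maximum degree of $G$ is at most $k+r$ --- is exactly the piece the paper leaves implicit and inherits from the $\WDCE$ argument of Mathieson and Szeider. That addition is correct and necessary, since the metatheorem of~\cite{FG01} applies only to locally tree-decomposable classes of structures, and the formula length is indeed bounded by a function of $k$ and $r$.

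There is, however, a genuine gap in the step where you assert that the enriched incidence structure $S_{G}$ ``has Gaifman graph of degree bounded by a function of $k+r$ once the reduction above has been applied.'' The relations $N_{i}$ and $M_{j}$ are \emph{binary} relations on pairs of vertices encoding $\nu$ and $\xi$, and $\xi$ is by definition a constraint on \emph{non-adjacent} pairs; in the variant $\WSRE^{r,\lambda,\mu}$, for example, $N_{\lambda}uv$ and $M_{\mu}uv$ hold for every pair $u,v$. Any two elements occurring together in a tuple of a relation are adjacent in the Gaifman graph, so the Gaifman graph of $S_{G}$ contains a clique on all of $V$ regardless of the degree of $G$, and its (local) tree-width is unbounded. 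Bounding $\Delta(G)$ therefore does not place $S_{G}$ in a class to which~\cite{FG01} applies. You do flag a worry about the $N_{i},M_{j}$ encoding, but you treat it as a semantic-correctness issue ``addressed inside Lemma~\ref{lem:logic_formula_exist_for_WSRE}''; that lemma only certifies that $S_{G}\models\phi_{k}$ if and only if the instance is a \Yes{}-instance, and says nothing about the structural sparsity hypothesis the metatheorem needs. Closing the gap requires either eliminating the binary relations (e.g.\ hard-coding $\lambda$ and $\mu$ into the formula for the constant variants $\WERE^{r,\lambda}$ and $\WSRE^{r,\lambda,\mu}$, so that only unary relations remain and the Gaifman graph is just the subdivision of $G$) or encoding $\nu$ and $\xi$ in a way that does not densify the structure. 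In fairness, the paper's one-line proof does not address this point either.
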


\section{WDCE and Treewidth}\label{sec:WDCE_tw}

We now return to the $\WDCE$ problem with an alternate parameterization, the treewidth $\tw(G)$ of the input graph $G$. There are several options for parameterizing, dependent on what combination of the treewidth, the degree bound $r$ and the editing cost $k$ is chosen. Of course if both $k$ and $r$ are part of the parameterization, we already have a complete classification in Theorem~\ref{thm:queen}. It can also be observed that if a graph $G$ has treewidth $\tw(G) \leq t$, then there is some vertex $v \in V(G)$ with $d(v) \leq t$. Therefore for $\WDCE^{r}_{\plain}(\vdel{},\edel{})$ if $r > t$ we may immediately answer \No{}. The vertex $v$ with $d(v) \leq t$ must be deleted, however the resultant graph $G'$ has $\tw(G') \leq t$, therefore this process cascades and the entire graph must be deleted. Furthermore if $r \leq t$ and $k$ is also a parameter, then Theorem~\ref{thm:queen} applies. As $\WDCE^{*}(S)$ for $\emptyset \neq S \subseteq \{\edel{},\eadd{}\}$ is in $\P{}$, parameterization by any combination of treewidth, $k$ and $r$ does not affect the complexity. 

Samer and Szeider~\cite{SS08} show that the \textsc{General Factor}${}=\iWDCE_{\plain}(\edel{})$ problem is $\W[1]$-hard when parameterized by treewidth alone. We can extend this to $\WDCE(\edel{})$.

\begin{proposition}
$\WDCE_{\plain}(\edel{})$ is $\W[1]$-hard when parameterized by the treewidth of the input graph. Furthermore it remains $\W[1]$-hard when the input graphs are bipartite and the degree constraints of one partite set are limited to $\{1\}$.
\end{proposition}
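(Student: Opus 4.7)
The plan is to build on Samer and Szeider's W[1]-hardness of \textsc{General Factor} $= \iWDCE_{\plain}(\edel{})$ parameterized by treewidth. For the first assertion I observe that $\WDCE_{\plain}(\edel{})$ and $\iWDCE_{\plain}(\edel{})$ differ only in the presence of the editing budget~$k$. Given a \textsc{General Factor} instance $(G,\delta)$, the instance $(G,\delta,k)$ of $\WDCE_{\plain}(\edel{})$ with $k=|E(G)|$ is equivalent (the bound is vacuous since at most $|E(G)|$ edges can be deleted), and the treewidth of the underlying graph is unchanged. So W[1]-hardness for parameter $\tw(G)$ transfers immediately.

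For the bipartite strengthening, I intend to give a treewidth-preserving reduction from \textsc{General Factor}. Given $(G,\delta)$, construct a bipartite graph $G'=(A\cup B,E')$ as follows. Put $V(G)$ into $A$ with the original degree lists. For each edge $e=uv\in E(G)$, add one auxiliary vertex $a_e$ to $A$ with $\delta(a_e)=\{0,2\}$ and two vertices $b_{e,u},b_{e,v}$ to $B$, each with $\delta=\{1\}$; include the four edges $b_{e,u}u$, $b_{e,v}v$, $b_{e,u}a_e$, $b_{e,v}a_e$. Finally set $k' = |E(G')|$. By construction every vertex of $B$ has degree list $\{1\}$, so the stated restriction is met.

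The critical step is verifying that each edge gadget behaves as a binary switch. Each $b_{e,u}$ (and likewise $b_{e,v}$) has initial degree $2$ and must end with degree $1$, so exactly one of its incident edges is deleted. The constraint $\delta(a_e)=\{0,2\}$ forbids an odd final degree at $a_e$, which couples the two choices: either both $b_{e,u}a_e$ and $b_{e,v}a_e$ are kept (so $b_{e,u}u$ and $b_{e,v}v$ are deleted; interpret this as ``delete $e$'') or both are deleted (so $b_{e,u}u$ and $b_{e,v}v$ are kept; ``keep $e$''). Consequently, the final degree of an original vertex $u\in A$ equals the number of edges of $G$ incident on $u$ that are ``kept'', giving a bijection between valid \textsc{General Factor} subgraphs of $G$ and valid edge-deletion solutions of the $\WDCE_{\plain}(\edel{})$ instance on $G'$.

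Finally I need to check treewidth preservation. Given any tree decomposition of $G$ of width $t$, for each $e=uv\in E(G)$ pick a bag $X$ containing both $u$ and $v$ (which exists since $uv$ is an edge of $G$) and attach a new leaf bag $X\cup\{a_e,b_{e,u},b_{e,v}\}$. This yields a tree decomposition of $G'$ of width at most $t+3$, so $\tw(G')\le \tw(G)+3$. The main subtlety I expect to handle carefully is keeping the $\{0,2\}$-vs.-$\{1\}$ bookkeeping in the gadget tight, but since the coupling argument above is local to each gadget, the correctness and the treewidth bound compose cleanly, and W[1]-hardness of the restricted bipartite problem follows.
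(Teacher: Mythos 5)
Your proof is correct, but for the bipartite strengthening it takes a genuinely different route from the paper. The paper's proof is a one-liner for both assertions: it observes that Samer and Szeider's construction for \textsc{General Factor} parameterized by treewidth \emph{already} produces bipartite instances in which one partite set has all degree lists equal to $\{1\}$, so the only thing to add is the vacuous budget $k=\sum_{uv\in E(G)}\rho(uv)$ (your $k=|E(G)|$ in the unweighted setting is the same move). You instead treat the source problem as a black box and re-establish the bipartite restriction yourself via the path gadget $u - b_{e,u} - a_e - b_{e,v} - v$ with $\delta(b_{e,u})=\delta(b_{e,v})=\{1\}$ and $\delta(a_e)=\{0,2\}$; the parity coupling at $a_e$ does make each gadget a two-state switch, the degree bookkeeping at the original vertices is exact, and the leaf-bag argument gives $\tw(G')\le\tw(G)+3$, so the reduction is a valid parameterized reduction. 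What your approach buys is self-containment: it does not require the reader (or you) to inspect the internals of the cited construction, and it shows the bipartite-with-$\{1\}$-lists restriction is no loss of generality for \emph{any} hard family of \textsc{General Factor} instances of bounded treewidth, at the price of introducing degree lists $\{0,2\}$ on the auxiliary side and a slightly longer argument. What the paper's approach buys is brevity, at the price of leaning on a structural property of the cited reduction that is asserted rather than re-derived.
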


\begin{proof}
Samer and Szeider's proof~\cite{SS08} establishes the $\W[1]$-hardness of $\iWDCE_{\plain}(\edel{})$ when restricted to bipartite graphs where the degree constraints of one partite set are $\{1\}$. We simply choose $k = \sum_{uv \in E(G)} \rho(uv)$, where $G$ is the input graph.
\end{proof}



If we set the vertex weights appropriately, we can also allow vertex deletion. However we can no longer claim unit weights.

\begin{corollary}
$\WDCE(\vdel{},\edel{})$ is $\W[1]$-hard when parameterized by the treewidth of the input graph. Furthermore it remains $\W[1]$-hard when the input graphs are bipartite and the degree constraints of one partite set are limited to $\{1\}$.
\end{corollary}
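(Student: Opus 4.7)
The plan is to derive this corollary from the preceding proposition by a simple weight-based reduction that makes vertex deletion prohibitively expensive, so the additional operation $\vdel{}$ cannot be used in any solution of cost at most $k$. This reduction does not alter the underlying graph, so both the treewidth bound and the bipartite-with-one-side-$\{1\}$ structure are preserved verbatim.

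Concretely, I would start with an arbitrary instance $(G,(k,r))$ of $\WDCE_{\plain}(\edel{})$ produced by the proposition (so $G$ is bipartite, one partite set has degree constraint $\{1\}$, all weights are $1$, and the parameter is $\tw(G)$). I would construct an instance $(G,\rho,\delta,(k',r))$ of $\WDCE(\vdel{},\edel{})$ on the same graph $G$, with the same degree function $\delta$, by setting $\rho(e)=1$ for every $e\in E(G)$, $\rho(v)=k+1$ for every $v\in V(G)$, and $k'=k$. The treewidth of the input is unchanged, the graph remains bipartite, and the degree constraints on the chosen partite set are still restricted to $\{1\}$.

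For correctness, any editing solution of total weighted cost at most $k'=k$ in the new instance cannot include a vertex deletion, because a single vertex deletion would already cost $k+1>k$. Hence any solution uses only edge deletions and, since edges still have unit weight, it is a solution of the original $\WDCE_{\plain}(\edel{})$ instance of cost at most $k$; conversely, any solution of the original instance is trivially a solution of the new instance with the same cost. Thus the reduction is correct, it is clearly polynomial-time, and the parameter $\tw(G)$ is preserved exactly. Since it reduces the $\W[1]$-hard problem of the proposition to $\WDCE(\vdel{},\edel{})$ parameterized by treewidth while retaining the bipartiteness and one-sided $\{1\}$ restriction, the claimed hardness follows.

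The only subtlety worth flagging is the remark immediately preceding the corollary that one can ``no longer claim unit weights'': the construction uses vertex weights $k+1$, which is why the result is stated for $\WDCE$ rather than $\WDCE_{\plain}$. There is no genuine obstacle in the proof itself; the main point is simply to observe that the proposition's reduction already produces all the structural features we want, and the weighting trick carries them over to the extended operation set.
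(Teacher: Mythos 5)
Your proposal is correct and matches the paper's own argument exactly: the paper likewise sets $\rho(v)=k+1$ for every vertex so that no vertex deletion fits within the budget, reducing to the edge-deletion-only proposition while preserving the graph, its treewidth, and the bipartite one-sided-$\{1\}$ structure. Your write-up simply spells out the correctness details more explicitly than the paper does.
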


\begin{proof}
For all vertices $v \in V(G)$ set $\rho(v) = k+1$. Then no vertex can be deleted within the cost.
\end{proof}



By subdividing the edges and weighting the original vertices we can restrict the operations to vertex deletion alone.

\begin{corollary}
$\WDCE(\vdel{})$ is $\W[1]$-hard when parameterized by the treewidth of the input graph. Furthermore it remains $\W[1]$-hard when the input graphs are bipartite and the degree constraints of one partite set are limited to $\{1\}$.
\end{corollary}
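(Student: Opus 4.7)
The plan is to reduce from $\WDCE(\vdel{},\edel{})$, shown $\W[1]$-hard in the previous corollary, by edge subdivision. Given an instance $(G,(k,r))$ output by that corollary---so $G=(V,E)$ is bipartite with partite sets $A\cup B$, every vertex of $B$ has degree list $\{1\}$, and every vertex already carries $\rho(v)=k+1$ so vertex deletion is outside the budget---form $G'$ by replacing each edge $e=uv\in E(G)$ with the two-edge path $u$--$w_e$--$v$ through a fresh vertex $w_e$. Set $\rho(w_e)=\rho(e)$ and $\delta(w_e)=\{2\}$, the structural degree of $w_e$ in $G'$; retain $k$, $r$, and the values of $\rho$ and $\delta$ on $V(G)$.

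The key verifications, in order, are the following. First, $G'$ is bipartite with bipartition $(V(G),\{w_e\}_{e\in E(G)})$, since every edge of $G'$ joins $V(G)$ to a subdivision vertex, and $\tw(G')=\tw(G)$ for $\tw(G)\ge 2$ (subdividing an edge can be accommodated by inserting a size-$3$ bag between the bags containing the two endpoints in any optimal tree decomposition of $G$), so the treewidth parameterization is preserved. Second, any vertex deletion solution $T$ for $G'$ with cost at most $k$ must satisfy $T\cap V(G)=\emptyset$ because $\rho(v)=k+1$, whence $T\subseteq\{w_e\}_e$; writing $S=\{e:w_e\in T\}\subseteq E(G)$, the cost carries over, and $d_{G'-T}(v)=d_{G-S}(v)$ for every $v\in V(G)$, while every undeleted $w_e$ has degree exactly $2$ matching $\delta(w_e)$, so the degree list constraints hold in $G'-T$ iff they hold in $G-S$. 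The converse mapping, sending an edge deletion $S$ in $G$ to the vertex deletion $\{w_e:e\in S\}$ in $G'$, is immediate, completing the bidirectional correspondence.

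The delicate point, which I would flag as the main obstacle, is recovering the additional structural statement that one entire partite set of $G'$ is uniformly $\{1\}$-constrained. The direct subdivision above places the subdivision vertices (with list $\{2\}$) on one side and the union $A\cup B$ (with heterogeneous lists) on the other, so neither side is uniformly $\{1\}$. To restore this I would replace each edge by a length-three path $u$--$x_e$--$y_e$--$v$, position $x_e$ on the same partite side as $B$, set $\delta(x_e)=\{1\}$, and then calibrate $\rho(x_e),\rho(y_e),\delta(y_e)$ so that the only feasible configurations on the gadget are the ones that delete both $x_e$ and $y_e$ or keep both, thereby linking gadget deletion to single edge deletion in $G$. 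Arranging this linkage without admitting spurious ``half-deletion'' states requires care in combining the weight function with the degree list $\delta(y_e)$, possibly paired with a constant factor rescaling of $k$; once it is in place the rest of the argument runs as above, giving both the bipartite and the $\{1\}$-on-one-partite-set refinements of the result.
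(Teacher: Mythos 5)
Your core construction for the first claim---subdividing every edge $uv$ with a fresh vertex $w_e$ carrying $\rho(w_e)=\rho(uv)$ and $\delta(w_e)=\{2\}$, while the original vertices keep weight $k+1$ and so cannot be deleted---is exactly the route the paper takes; the paper's entire justification is the one sentence ``by subdividing the edges and weighting the original vertices we can restrict the operations to vertex deletion alone,'' and your verification that deleting $w_e$ simulates deleting $uv$, that treewidth is preserved, and that costs carry over is a correct fleshing-out of that sketch. So for the statement ``$\WDCE(\vdel)$ is $\W[1]$-hard parameterized by treewidth'' your proof is fine and matches the paper.

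The ``furthermore'' clause is where things diverge, and you have put your finger on a gap that the paper itself silently ignores: after plain subdivision the bipartition of $G'$ is $(A\cup B,\ \{w_e\}_e)$, and neither side is uniformly $\{1\}$-constrained, so the refinement does not follow from the construction as given. Flagging this is to your credit. However, the repair you sketch does not work as described. In the length-three gadget $u$--$x_e$--$y_e$--$v$ with $x_e$ on the $\{1\}$-side and $\delta(x_e)=\{1\}$, the configuration that is supposed to represent ``edge $e$ is kept'' must retain both $x_e$ (so that $u$ gains a neighbour) and $y_e$ (so that $v$ does); but then $x_e$ keeps both of its neighbours $u$ (undeletable) and $y_e$, so $d(x_e)=2\neq 1$ and the configuration is infeasible. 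Hence ``keep both or delete both'' cannot be the two feasible states of this gadget, and the reduction collapses. Repairing it requires a longer odd-length path (so that the internal vertices landing on the $\{1\}$-side can each shed one neighbour internally in the ``keep'' state) together with weights that exclude the asymmetric ``half-deleted'' configurations in which only one endpoint of $e$ perceives the edge; neither you nor the paper supplies such a gadget, so the second sentence of the corollary should be regarded as unproven by both arguments.
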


\subsection{Parameterizations Excluding $k$}\label{subsec:WDCE_tw_no_k}

If we consider version of the problem where the number of edit operations is unbounded, we can obtain some further results for limited cases. In this setting, as the number of deletions is unbounded, we do not consider the trivial case where $V(G) = \emptyset$ as a valid solution.

\begin{lemma}
$\iWDCE^{r}_{\plain}(\vdel{})$ is fixed-parameter tractable when parameterized by the treewidth of the input graph.
\end{lemma}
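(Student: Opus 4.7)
The plan is to combine a short structural observation with Courcelle's theorem. First I observe that if $r > \tw(G)$ then the instance is immediately a \No{}-instance: every graph of treewidth at most $t$ contains a vertex of degree at most $t$, since in any tree decomposition of width $t$ a leaf bag $B_{\ell}$ with parent $B_{p}$ admits some $v \in B_{\ell}\setminus B_{p}$ all of whose neighbours lie in $B_{\ell}$, giving $d(v)\le t$. Since vertex deletion cannot increase treewidth, every non-empty induced subgraph of $G$ likewise contains a vertex of degree at most $\tw(G) < r$, so no non-empty subgraph of $G$ can be $r$-regular.

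We may therefore assume $r \le \tw(G)$, and in particular $r$ is bounded by a function of the parameter. With this bound in hand, the property ``there exists a non-empty $S \subseteq V(G)$ such that $G[V(G)\setminus S]$ is $r$-regular'' is expressible by a monadic second-order sentence $\phi_{r}$ whose length depends only on $r$: the formula existentially quantifies over a vertex set $S$, asserts that $V(G)\setminus S$ is non-empty, and requires that every $v \notin S$ has exactly $r$ neighbours outside $S$, a first-order statement for each fixed $r$. Since $r \le \tw(G)$, the total length of $\phi_{r}$ is bounded by a function of $\tw(G)$. Applying Courcelle's theorem together with Bodlaender's algorithm for constructing a tree decomposition then yields a running time of the form $f(\tw(G))\cdot |V(G)|$ for a computable function $f$, giving fixed-parameter tractability.

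Alternatively one could give an explicit dynamic programme over a nice tree decomposition of $G$. At each bag $B$ a state records, for every $v \in B$, whether $v$ is deleted and, if not, the degree $v$ has accumulated within the processed subgraph (capped above $r$ to signal invalidity). This gives at most $(r+2)^{\tw(G)+1}$ states per bag, which is bounded by a function of the parameter. The transitions at introduce, forget and join nodes are standard; at a forget node we insist that any undeleted forgotten vertex has reached degree exactly $r$, and at the root we reject the state in which every vertex has been deleted so as to enforce the non-emptiness requirement highlighted in the preceding paragraph of the paper.

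The main obstacle is really just the initial structural observation bounding $r$ by $\tw(G)$; once this is in place, both routes are standard, with only mild care needed to encode non-emptiness, and to ensure that the MSO formula or DP state explicitly distinguishes deletion of $v$ from $v$ belonging to a clean region of the final graph.
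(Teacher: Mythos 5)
Your proof takes essentially the same route as the paper's: the observation that $r > \tw(G)$ forces a \No{}-answer (every graph of treewidth $t$ has a vertex of degree at most $t$, and the forced deletions cascade), followed by an application of Courcelle's theorem to an MSO sentence that guesses the deleted set $S$ and checks that each surviving vertex has exactly $r$ surviving neighbours, with $r \le \tw(G)$ bounding the formula length in terms of the parameter. Your explicit non-emptiness clause for $V(G)\setminus S$ and the alternative dynamic programme are reasonable additions (the paper's displayed sentence in fact omits the non-emptiness check, which the problem statement requires), but the core argument is identical.
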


\begin{proof}
As noted earlier, if $r > \tw(G)$ for a graph $G$, then $(G,\tw(G))$ is a \No{}-instance of $\iWDCE^{r}_{\plain}(\vdel{})$, as the entire graph would have to be deleted. However if $r \leq \tw(G)$, we may apply Courcelle's Theorem with the following second order sentence:
\[
\begin{array}{r}
\exists S \forall v \forall u(Vv \rightarrow Sv \vee \exists v_{1}, \ldots, v_{r}(\bigwedge_{i\neq j \in [r]}(v_{i} \neq v_{j}) \wedge\\ \bigwedge_{i \in [r]} (\neg Sv_{i} \wedge Avv_{i} \wedge v \neq v_{i}) \wedge\\ (Avu \rightarrow Su \vee \bigvee_{i \in [r]} u = v_{i})))
\end{array}
\]
where $Axy$ is shorthand for $\exists e(Ee\wedge Vx \wedge Vy \wedge Ixe \wedge Iye)$ (i.e, $x$ and $y$ are adjacent). The sentence ensures that there is a set $S$ (the deleted vertices) such that for every vertex $v$ and every vertex $u$, either $v$ is deleted, or it is adjacent to $r$ distinct vertices that haven't been deleted, and if $u$ is adjacent to $v$, then it is one of these vertices, or it has been deleted.
\end{proof}

This can be extended to include edge deletion.

\begin{lemma}
$\iWDCE^{r}_{\plain}(\vdel{},\edel{})$ is fixed-parameter tractable when parameterized by the treewidth of the input graph.
\end{lemma}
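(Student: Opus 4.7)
The plan is to mirror the structure of the preceding lemma on $\iWDCE^{r}_{\plain}(\vdel{})$. First I would reuse the cascade observation: since edge deletion cannot raise any vertex's degree, the subgraph obtained after any sequence of edits still has treewidth at most $\tw(G)$, hence contains a vertex of degree at most $\tw(G)$. If $r > \tw(G)$ this vertex cannot reach target degree $r$ without vertex addition, and since its deletion preserves the treewidth bound, the argument cascades and forces $V(G')=\emptyset$, which is not a valid solution. So for $r>\tw(G)$ we immediately answer \No{}; for $r\leq\tw(G)$ the parameter subsumes $r$ and we can invoke Courcelle's theorem on a graph of bounded treewidth.

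Next I would build an MSO$_2$ sentence that certifies a \Yes{}-instance. Unlike the vertex-only case, we now quantify over two sets simultaneously: a set $S$ of deleted vertices and a set $T$ of deleted edges. A vertex $v$ ``survives'' if $\neg Sv$, and an edge $e$ ``survives'' if $Ee \wedge \neg Te$. As before, let $Axy$ abbreviate the existence of a surviving edge incident to both $x$ and $y$ whose endpoints are surviving vertices, namely
\[
A'xy \;\equiv\; \exists e\bigl(Ee \wedge \neg Te \wedge Vx \wedge Vy \wedge \neg Sx \wedge \neg Sy \wedge Ixe \wedge Iye\bigr).
\]
The sentence then reads, analogously to the vertex-only formula,
\begin{align*}
\exists S\,\exists T\,\forall v\,\forall u\Bigl(&Vv \rightarrow Sv \vee \exists v_{1},\ldots,v_{r}\bigl(\textstyle\bigwedge_{i\neq j\in [r]} v_{i}\neq v_{j} \wedge{}\\
&\textstyle\bigwedge_{i\in [r]}(A'vv_{i} \wedge v\neq v_{i}) \wedge (A'vu \rightarrow \textstyle\bigvee_{i\in[r]} u=v_{i})\bigr)\Bigr).
\end{align*}
This asserts the existence of an edit set $(S,T)$ whose removal leaves every surviving vertex with exactly $r$ surviving neighbours joined by surviving edges, i.e.\ an $r$-regular subgraph reachable from $G$ by vertex and edge deletion.

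Finally I would note that the sentence is fixed (depending only on the constant $r\leq \tw(G)$, which in turn is bounded by the parameter), so Courcelle's theorem gives an algorithm running in time $f(\tw(G))\cdot |G|^{O(1)}$, establishing fixed-parameter tractability. The only mild subtlety, and the one step that warrants care, is confirming that the cascading argument used to dispatch the $r>\tw(G)$ branch is robust when edge deletion is permitted; this is exactly where I rely on the monotonicity of treewidth under taking subgraphs together with the fact that edge deletion cannot increase any degree, so no edit can rescue the minimum-degree vertex.
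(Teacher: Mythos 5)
Your proposal is correct and follows essentially the same route as the paper: dispatch the case $r > \tw(G)$ via the cascading minimum-degree argument, then encode the edit sets as monadic second-order set variables over the incidence structure and invoke Courcelle's theorem, noting that the formula size depends only on $r \leq \tw(G)$. The only differences are cosmetic --- you split the deletion set into separate vertex and edge sets $S$ and $T$ and express the degree condition through surviving neighbours via $A'$, whereas the paper uses a single set $S$ and explicitly names the $r$ surviving incident edges; both correctly count degree in the edited simple graph.
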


\begin{proof}
As before if $r > \tw(G)$, the instance is a \No{}-instance. Then we need only construct a second order logic sentence that encodes the problem.
\[
\exists S \forall v \forall e(Vv \rightarrow Sv \vee (\exists e_{1},\ldots,e_{r}, v_{1},\ldots,v_{r}(\phi_{1}\wedge\phi_{2})))
\]
where $\phi_{1}$ is the conjunction of subclauses (1)--(5):
\begin{enumerate}
\item[(1)] $\bigwedge_{i \in [r]} \neg Se_{i} \wedge \neg Sv_{i}$\ ``$e_{i}$ and $v_{i}$ have not been deleted;''
\item[(2)] $\bigwedge_{i \in [r]} Ee_{i} \wedge Vv_{i}$\ ``$e_{i}$ is an edge and $v_{i}$ is a vertex;''
\item[(3)] $\bigwedge_{i \in [r]} v_{i} \neq v$\ ``$v$ is not equal to any $v_{i}$;''
\item[(4)] $\bigwedge_{i \in [r]} Iv_{i}e_{i} \wedge Ive_{i}$\ ``$v$ and $v_{i}$ are adjacent;''
\item[(5)] $\bigwedge_{i \neq j \in [r]} v_{i} \neq v_{j}$\ ``the $v_{i}$s are distinct.''
\end{enumerate}
and 
\[
\phi_{2} = Ive \rightarrow (\bigvee_{i \in [r]} (e = e_{i}) \vee Se \vee \exists u(Iue \wedge u \neq v \wedge Su))
\]
$\phi_{2}$ ensures that if there is an edge incident to $v$, then either it is one of the $r$ edges making up the the regular degree of $v$, it was deleted, or its other endpoint was deleted.
\end{proof}

If vertex deletion and edge addition are allowed, then the problem becomes trivially polynomial.

\begin{lemma}
$\iWDCE^{r}_{\plain}(\vdel{},\edel{},\eadd{})$ and $\iWDCE^{r}_{\plain}(\vdel{}, \eadd{})$ are polynomial-time solvable.
\end{lemma}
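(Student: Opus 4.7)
The plan is to exploit the fact that, with both $\vdel{}$ and $\eadd{}$ available and no bound on $k$, we have enough operational freedom to manufacture a fixed $r$-regular target graph directly. The canonical choice is $K_{r+1}$, which is $r$-regular, so I would reduce the decision problem to the single question: does $G$ contain at least $r+1$ vertices?

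For the positive direction, suppose $\Card{V(G)} \geq r+1$. Choose an arbitrary subset $R \subseteq V(G)$ of size $r+1$, delete every vertex of $V(G)\setminus R$ (which, as a side effect, removes all edges incident to the deleted vertices), and then add every missing edge inside $R$. The resulting graph is exactly $K_{r+1}$, in which every surviving vertex has degree $r$, so the target degree constraint $\delta \equiv r$ is met. This construction uses only $\vdel{}$ and $\eadd{}$, so the argument applies uniformly to both $\iWDCE^{r}_{\plain}(\vdel{}, \edel{}, \eadd{})$ and $\iWDCE^{r}_{\plain}(\vdel{}, \eadd{})$; allowing $\edel{}$ as an extra operation contributes nothing.

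For the negative direction, suppose $\Card{V(G)} < r+1$. Any $r$-regular graph has at least $r+1$ vertices, and since vertex addition is not among the allowed operations, $\Card{V(G)}$ can only weakly decrease under editing; hence no non-empty $r$-regular graph is reachable, and the empty graph is explicitly excluded as a valid solution (as noted in Section~\ref{subsec:WDCE_tw_no_k}). The decision therefore reduces to comparing $\Card{V(G)}$ with $r+1$, which is linear time, while an explicit witness can be emitted in $O(r^{2})$ time. The only subtle point requiring care is the boundary behaviour for small inputs together with the explicit exclusion of the empty output; once these are pinned down, the argument is immediate, which is precisely why the author flags this case as trivially polynomial.
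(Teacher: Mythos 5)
Your proposal is correct and matches the paper's own argument: both reduce the problem to checking whether $\Card{V(G)} \geq r+1$, deleting all but $r+1$ vertices, and adding edges to form a $K_{r+1}$, answering \No{} otherwise. You spell out the negative direction (no vertex addition, empty graph excluded) slightly more explicitly than the paper, but the approach is identical.
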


\begin{proof}
As the number of edit operations is unlimited, we can simply delete all but $r+1$ vertices, and make the graph a $K_{r+1}$.

If there are less than $r+1$, it is not possible to have an $r$-regular graph, and we answer \No{} immediately.
\end{proof}

\section{Conclusion}
\label{sec:conclusion}

We have examined a series of editing problems with constraints based on natural extensions of regularity. As with the \WDCE{} series of problems, problems with extended edge-degree based constraints are in \FPT{} with combined parameter $k+r$, but \W[1]-hard with parameter $k$ and \paraNP{}-complete with paramter $r$. There are a number of avenues of further research open with these problems, the mostprominent being the development of a concrete algorithm for $\WSRE{}(\vdel{}, \edel{}, \eadd{})$. The techniques of~\cite{Golovach15} may be applicable, however this is certainly not trivial.

We also consider parameterizations of variants of the \WDCE{} problem by the treewidth of the input graph and demonstrate tractability for these problems.

\bibliographystyle{plain}
\bibliography{final}

\end{document}